\theoremstyle{definition}
\newtheorem{defi}{Definition}
\theoremstyle{plain}
\newtheorem{lem}[defi]{Lemma}
\newtheorem{prop}[defi]{Proposition}
\DeclareMathOperator{\res}{res}
\DeclareMathOperator{\id}{id}
\begin{document}

\begin{abstract} 
Contextuality describes the nontrivial dependence of measurement outcomes on particular choices of jointly measurable observables. In this work we review and generalize the \emph{bundle diagram representation} introduced in [S.\ Abramsky \emph{et al.}, 24th EACSL Annual Conference on Computer Science Logic, \textbf{41}, 211--228, (2015)]
in order to graphically demonstrate the contextuality of diverse empirical models.  
\end{abstract}

\title{Contextuality and bundle diagrams} 
\author{Kerstin Beer}\author{Tobias J.\ Osborne}
\affiliation{Institut f\"ur Theoretische Physik, Leibniz Universit\"at Hannover, Germany}
\date{\today}
\maketitle

\section{Introduction}
Contextuality -- the general impossibility of assigning \emph{predetermined} outcomes to measurements of quantum observables -- is a crucial feature of quantum mechanics \cite{Kochen1975,Bell1964}. Quantum contextuality has a wide variety of applications from understanding Bell nonlocality through to device-independent information processing \cite{Heywood1983, Stairs1983, Mermin1990, Raussendorf2012, Raussendorf2000, Nielsen2006, Gross2007}. Recently, it has been realised that contextuality is the fundamental resource enabling quantum computational processes \cite{Anders2009,Delfosse2015,Howard2014,Raussendorf2015}. There has been rapid recent progress in understanding contextuality, both theoretically {\cite{Anders2009,Cabello2014,Raussendorf2016,Okay2017,Abramsky2012a,Abramsky2011} and experimentally \cite{Amselem2009,Bartosik2009,Cabello2008a,Cabello2008,Kirchmair2009,Lapkiewicz2011,Michler2000,Spekkens2009,Yu2012,Zhang2013}, leading to many insights into the behaviour of contextual models. However, many mysteries remain and a general theory of contextuality analogous to the theory of quantum entanglement, remains to be achieved \cite{Amaral2015,Gnacinski2016,Kurzynski2017}.

Several frameworks have been developed to study quantum contextuality exploiting, variously, graph and hypergraph theory, sheaf theory, and cohomology theory  \cite{Cabello2014,Raussendorf2016,Okay2017,Raussendorf2016a,Stephen2016}. Most relevant to this work are a sequence of works by Abramsky and coworkers aiming to quantify contextuality using the technology of \emph{sheaf theory}. Here one conceives of an \emph{empirical model} -- a collection of probability distributions for jointly measurable observables -- as a kind of ``vector bundle'': the observables of the empirical model are points in the ``base space'' where two points are connected if they may be jointly measured (i.e., if they are in the same context). The possible measurement outcomes are then associated to the contexts (the edges) similar to transition maps. This analogy is not just superficial as one can exploit the theory of sheaves, developed to generalise that of vector bundles, to exactly capture the data of an empirical model. Exploiting results from sheaf theory one can formulate various obstructions to non-contextuality in terms of properties of the sections of the sheaf.  

The formalism of sheaf theory is very powerful and many key results have been obtained via its application. However, it is challenging for the newcomer to appreciate its utility. In particular, although it is motivated by vector bundle theory, a geometrical interpretation of the sheaf-theoretic formulation of an empirical model may be unclear. To clarify its geometrical aspects  Abramsky \emph{et al.}\ \cite{Abramsky2015} introduced a graphical method via \emph{bundle diagrams} to visualise empirical models, particularly in the setting where there are two agents who can each choose to measure one of two possible observables. Bundle diagrams are visually striking and of considerable utility in understanding contextuality. For this reason we were interested in generalising the bundle diagram approach as a pedagogical tool to study empirical models involving more agents and measurement settings. 

In this paper we study a variety of empirical models going beyond the two-agent scenario and explain how to adapt the bundle diagram technology of Abramsky \emph{et al.}\ \cite{Abramsky2015} and Car\`u \cite{Caru2017} to illustrate the contextuality of these empirical models. 
In the first section we review the sheaf-theoretic formulation of empirical models and therefore recall the most important definitions from \cite{Abramsky2011}. Introducing the bundle diagram representation helps one to geometrically understand the sheaf structure of the empirical models. Furthermore we review how one can deduce contextuality from the non-existence of a global section. In the following we represent a measurement-based quantum computational model via a bundle diagram. We then graphically illustrate that, for a Greenberger-Horne-Zeilinger type scenario, there exists no empirical model with a global section. We conclude with a discussion of ane examples arising from the cluster state on a ring.

\section{The sheaf-theoretic structure of empirical models}

The aim of this section is to understand how to associate to an empirical model a sheaf and to describe contextuality via this mathematical structure. We restrict ourselves to the case where Pauli measurement operations are the observables. Throughout this section we very closely follow the  papers of Abramsky and coworkers \cite{Abramsky2011,Abramsky2012a,Abramsky2015,Caru2017}.

In the basic setting there are several agents, who can each select from a set of measurements and observe outcomes. We call the procedure whereby each agent performs a measurement on their system and observes an outcome, an \emph{event}. A probability distribution on these events results from repeated trials. 
\begin{defi}
	An \emph{empirical model} is a family of probability distributions on events, one for each choice of measurements. A set of allowed jointly measurable observables is called a \emph{measurement context}.\label{firstdefi}
	
\end{defi}
We exemplify the theory in terms of a bipartite qubit model, where each of the two parties, Alice and Bob, can apply a Pauli measurement operation to the state
\begin{align*}
	\ket{\Psi}&=\frac{1}{\sqrt{2}}(\ket{0}_A\otimes\ket{0}_B+\ket{1}_A\otimes\ket{1}_B) \equiv\frac{1}{\sqrt{2}}(\ket{00}+\ket{11})\text{.} 
\end{align*}
We choose for the allowed observables $X=\ket{0}\bra{1}+\ket{1}\bra{0}$ and $Z=\ket{0}\bra{0}-\ket{1}\bra{1}$ and index the measurement operations according to whether Alice or Bob carries out the measurement. The measurement setting where Alice measures $X_A=X\otimes\mathbbm{1}$ and Bob $X_B=\mathbbm{1}\otimes X$ is an example of a pair of jointly measurable observables (the two observables commute). For this bipartite system we take for our contexts the following four sets of jointly measurable observables,
\begin{align*}
	C_1&=\{X_A,X_B\},\\C_2&=\{X_A,Z_B\},\\C_3&=\{Z_A,X_B\},\quad \text{and}\\C_4&=\{Z_A,Z_B\}\text{.}
\end{align*}

Although the measurements applied depend on the context, the outcomes of the measurements can be described in a context-free manner because the eigenvalues of both $X$ and $Z$ are $\pm1$. To this end we label the outcome $(-1)^{j}$ where $j\in\{0,1\}$ is a bit indicating the outcome.  Thus, the probabilities that the measurement outcome is $j$ for Alice and $k$ for Bob may be summarised in the following table.
\begin{center}
	\begin{tabular}{c c | c c c c}
		\hline
		$A$ & $B$ & $ 0 0 $ & $ 1 0 $ & $ 0 1 $ & $ 1 1 $\\
		\hline
		$X_A$ & $X_B$ & $1/2$ & $0$ & $0$ & $1/2$\\
		$X_A$ & $Z_B$ & $1/4$ &   $1/4$ &   $1/4$ &  $1/4$\\
		$Z_A$ & $X_B$ & $1/4$ &   $1/4$ &   $1/4$ &  $1/4$\\
		$Z_A$ & $Z_B$ & $1/2$ & $0$ & $0$ & $1/2$\\
	\end{tabular}
	\captionof{table}{Empirical model for $\ket{\Psi}$.}\label{empiricalmodel}
\end{center}
\subsection{Joint measurability structures and abstract simplicial complexes}
In  Abramsky \emph{et al.}\ \cite{Abramsky2015} introduced a diagrammatic representation -- further developed in \cite{Caru2017} -- to depict empirical models whereby measurements are represented as vertices in a ``base space'' and possible outcomes as ``fibers'' above the base.

To describe this representation we first note \cite{Henson2012,Kunjwal2014} that a contextual model, or \emph{joint measurability structure}, may be represented via a combinatorial object known as an \emph{abstract simplicial complex}. Recall that a POVM is a map $A:\Omega\rightarrow \mathcal{B}^+(\mathcal{H})$ from an \emph{outcome set} (which we take to be finite from now on) to the convex cone of positive operators on a Hilbert space $\mathcal{H}$ such that 
\begin{equation}
	\sum_{j\in \Omega} A(j) = \mathbb{I}.
\end{equation}
We say that a POVM $A$ with outcome set $\Omega_1\times \cdots\times \Omega_n$ marginalises to a set of POVMs $\{A_1, \ldots, A_n\}$ with outcome sets $\{\Omega_1,\ldots, \Omega_n\}$, respectively, if 
\begin{equation}
	\sum_{j_1,\ldots, \widehat{j}_k, \ldots, j_n} A(j_1,\ldots, j_n) = A_k(j_k)
\end{equation}
for all $k=1,2,\ldots, n$, where the hat means that the variable is excluded from the summation. If, for a given set $\{A_1, \ldots, A_n\}$ of POVMs there is such a measurement $A$ marginalising to $\{A_1, \ldots, A_n\}$ then we say they are \emph{jointly measurable}. Thus, if a set of POVMs is jointly measurable then so is any subset of them. 

An elegant combinatorial object which naturally captures the structure of a contextual model is that of an abstract simplicial complex. 
\begin{defi}
	A family $\Delta$ of nonempty subsets of a set $M$ is an \emph{abstract simplicial complex} if, for every set $U\in \Delta$ and every nonempty subset $V\subset U$, $V$ also belongs to $\Delta$. The finite sets belonging to $\Delta$ are called \emph{faces} and the \emph{vertices} are the elements of the set $\bigcup\Delta$. (We henceforth assume that $M=\bigcup \Delta$.)
\end{defi}
An abstract simplicial complex $\Delta$ gives rise to a topological space by endowing it with the \emph{Alexandroff topology} by defining a subset $\mathcal{U} \subset \Delta$ to be \emph{closed} if and only if $\mathcal{U}$ is itself an abstract simplicial complex, i.e., $\forall U\in \mathcal{U}$ if $V\subset U$ then $V\in \mathcal{U}$. The open sets are hence generated by \emph{stars}, where for a face $\sigma\in \Delta$ we set $\text{star}(\sigma) \equiv \{\tau \in \Delta\,|\,\sigma\in\tau\}$. In particular, this means that \emph{upper sets} in the poset of faces with respect to inclusion, i.e.\ maximal faces, are open. 

We henceforth take for the vertices of the abstract simplicial complex associated to a contextual model the set of all  allowed observables: a set of vertices form a face whenever the corresponding measurements can be performed jointly. Thus contexts correspond to faces of a such complex. This complex is called the \emph{base}.

Continuing the two-qubit example above, the base suitable to the above described model is the complex $\{ \{X_A\}, \{X_B\}, \{Z_A\}, \{Z_B\}, C_1, C_2, C_3, C_4\}$. This may be recognised as a square, where the four vertices represent the observables and the edges the contexts:
\begin{figure}[H]
	\centering
	\begin{tikzpicture}[scale=2.0]
	\foreach \u in {(0,0),(1,0),(1,-1),(0,-1)}
	\node[fill,circle,scale=0.5] at \u{};
	\coordinate (XA) at (0,0);
	\node at (XA) [left = 1mm of XA] {$X_A$};
	\coordinate (XB) at (1,0);
	\node at (XB) [right = 1mm of XB] {$X_B$};
	\coordinate (ZA) at (1,-1);
	\node at (ZA) [right = 1mm of ZA] {$Z_A$};
	\coordinate (ZB) at (0,-1);
	\node at (ZB) [left = 1mm of ZB] {$Z_B$};
	\draw[line width=1pt] (XA) 	-- (XB);
	\draw[line width=1pt] (XB) 	-- (ZA);
	\draw[line width=1pt] (ZA)	-- (ZB);
	\draw[line width=1pt] (ZB) 	-- (XA);
	\end{tikzpicture}
	\captionof{figure}{Base of the bundle diagram.}
\end{figure}
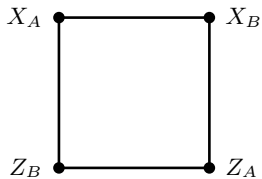

\subsection{Bundle diagrams} 
We can enhance the abstract simplicial complex representation by including information about the \emph{possible} measurement outcomes of the observables of a context. To this end we attach, above each vertex, the \emph{fiber} (or, more correctly, the \emph{stalk}) of possible measurement outcomes, which in this example are either $0$ or $1$: 
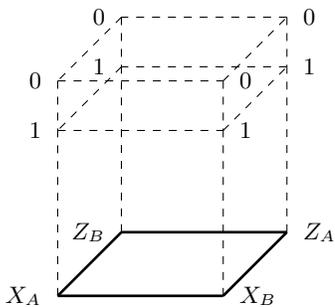
\begin{figure}[H]
	\centering
	\begin{tikzpicture}[scale=2.2]
	\coordinate (XA) at (0,0,0);
	\node at (XA) [left = 1mm of XA] {$X_A$};
	\coordinate (XB) at (1,0,0);
	\node at (XB) [right = 1mm of XB] {$X_B$};
	\coordinate (ZA) at (1,0,-1);
	\node at (ZA) [right = 1mm of ZA] {$Z_A$};
	\coordinate (ZB) at (0,0,-1);
	\node at (ZB) [left = 1mm of ZB] {$Z_B$};
	\draw[line width=1pt] (XA) 	-- (XB);
	\draw[line width=1pt] (XB) 	-- (ZA);
	\draw[line width=1pt] (ZA)	-- (ZB);
	\draw[line width=1pt] (ZB) 	-- (XA);
	\coordinate (1XA) at (0,1,0);
	\node at (1XA) [left = 1mm of 1XA] {$1$};
	\coordinate (1XB) at (1,1,0);
	\node at (1XB) [right = 1mm of 1XB] {$1$};
	\coordinate (1ZA) at (1,1,-1);
	\node at (1ZA) [right = 1mm of 1ZA] {$1$};
	\coordinate (1ZB) at (0,1,-1);
	\node at (1ZB) [left = 1mm of 1ZB] {$1$};
	\draw[dashed]  (XA) 	-- (1XA);
	\draw[dashed]  (XB) 	-- (1XB);
	\draw[dashed]  (ZA)	-- (1ZA);
	\draw[dashed]  (ZB) 	-- (1ZB);
	\draw[dashed] (1XA) 	-- (1XB);
	\draw[dashed] (1XB) 	-- (1ZA);
	\draw[dashed] (1ZB) 	-- (1XA);
	\draw[dashed] (1ZA) 	-- (1ZB);
	\coordinate (0XA) at (0,1.3,0);
	\node at (0XA) [left = 1mm of 0XA] {$0$};
	\coordinate (0XB) at (1,1.3,0);
	\node at (0XB) [right = 1mm of 0XB] {$0$};
	\coordinate (0ZA) at (1,1.3,-1);
	\node at (0ZA) [right = 1mm of 0ZA] {$0$};
	\coordinate (0ZB) at (0,1.3,-1);
	\node at (0ZB) [left = 1mm of 0ZB] {$0$};
	\draw[dashed]  (0XA) 	-- (1XA);
	\draw[dashed]  (0XB) 	-- (1XB);
	\draw[dashed]  (0ZA)	-- (1ZA);
	\draw[dashed]  (0ZB) 	-- (1ZB);
	\draw[dashed] (0XA) 	-- (0XB);
	\draw[dashed] (0XB) 	-- (0ZA);
	\draw[dashed] (0ZB) 	-- (0XA);
	\draw[dashed] (0ZA) 	-- (0ZB);
	\end{tikzpicture}
	\captionof{figure}{Fibers of the bundle diagram.}
\end{figure}

Depending on the empirical model not all measurement outcomes can occur. We illustrate this by connecting elements of the fibers above a face (i.e., above a context) if the corresponding outcome combination has a probability strictly larger than $0$. For the two-qubit example we depict this by connecting possible outcomes with purple lines:
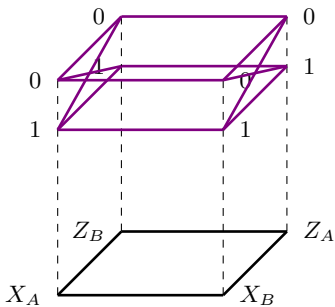
\begin{figure}[H]
	\centering
		\begin{tikzpicture}[scale=2.2]
		\coordinate (XA) at (0,0,0);
		\node at (XA) [left = 1mm of XA] {$X_A$};
		\coordinate (XB) at (1,0,0);
		\node at (XB) [right = 1mm of XB] {$X_B$};
		\coordinate (ZA) at (1,0,-1);
		\node at (ZA) [right = 1mm of ZA] {$Z_A$};
		\coordinate (ZB) at (0,0,-1);
		\node at (ZB) [left = 1mm of ZB] {$Z_B$};
		\draw[line width=1pt] (XA) 	-- (XB);
		\draw[line width=1pt] (XB) 	-- (ZA);
		\draw[line width=1pt] (ZA)	-- (ZB);
		\draw[line width=1pt] (ZB) 	-- (XA);
		\coordinate (1XA) at (0,1,0);
		\node at (1XA) [left = 1mm of 1XA] {$1$};
		\coordinate (1XB) at (1,1,0);
		\node at (1XB) [right = 1mm of 1XB] {$1$};
		\coordinate (1ZA) at (1,1,-1);
		\node at (1ZA) [right = 1mm of 1ZA] {$1$};
		\coordinate (1ZB) at (0,1,-1);
		\node at (1ZB) [left = 1mm of 1ZB] {$1$};
		\draw[dashed]  (XA) 	-- (1XA);
		\draw[dashed]  (XB) 	-- (1XB);
		\draw[dashed]  (ZA)	-- (1ZA);
		\draw[dashed]  (ZB) 	-- (1ZB);
		\draw[violet,line width=1pt] (1XA) 	-- (1XB);
		\draw[violet,line width=1pt] (1XB) 	-- (1ZA);
		\draw[violet,line width=1pt] (1ZB) 	-- (1XA);
		\draw[violet,line width=1pt] (1ZA) 	-- (1ZB);
		\coordinate (0XA) at (0,1.3,0);
		\node at (0XA) [left = 1mm of 0XA] {$0$};
		\coordinate (0XB) at (1,1.3,0);
		\node at (0XB) [right = 1mm of 0XB] {$0$};
		\coordinate (0ZA) at (1,1.3,-1);
		\node at (0ZA) [right = 1mm of 0ZA] {$0$};
		\coordinate (0ZB) at (0,1.3,-1);
		\node at (0ZB) [left = 1mm of 0ZB] {$0$};
		\draw[dashed]  (0XA) 	-- (1XA);
		\draw[dashed]  (0XB) 	-- (1XB);
		\draw[dashed]  (0ZA)	-- (1ZA);
		\draw[dashed]  (0ZB) 	-- (1ZB);
		\draw[violet,line width=1pt] (0XA) 	-- (0XB);
		\draw[violet,line width=1pt] (0XB) 	-- (0ZA);
		\draw[violet,line width=1pt] (0ZB) 	-- (0XA);
		\draw[violet,line width=1pt] (0ZA) 	-- (0ZB);
		\draw[violet,line width=1pt] (0XA) 	-- (1ZB);
		\draw[violet,line width=1pt] (1XA) 	-- (0ZB);
		\draw[violet,line width=1pt] (0XB) 	-- (1ZA);
		\draw[violet,line width=1pt] (1XB) 	-- (0ZA);
		\end{tikzpicture}
	\captionof{figure}{Bundle diagram for the empirical model of Table~\ref{empiricalmodel}.}\label{qubitbundle}
\end{figure}
The example illustrated above is noncontextual. Before proceeding, it is instructive to give an example of what a contextual empirical model looks like. To this end we consider the \emph{Popescu-Rohrlich box} \cite{Popescu1994}, which corresponds to a joint measurement scenario of a particular extremal state of a generalised probabilistic theory whose correlations exceed those allowed by quantum mechanics. Measurements of this state give rise to the following empirical model.
\begin{figure}[H]
	\centering
	\begin{tabular}{c c | c c c c}
		\hline
		$A$ & $B$ & $ 0 0 $ & $ 1 0 $ & $ 0 1 $ & $ 1 1 $\\
		\hline
		$N_A$ & $N_B$ & $1/2$ & $0$ & $0$ & $1/2$\\
		$N_A$ & $M_B$ & $1/2$ & $0$ & $0$ & $1/2$\\
		$M_A$ & $N_B$ & $1/2$ & $0$ & $0$ & $1/2$\\
		$M_A$ & $M_B$ & $0$ & $1/2$ & $1/2$ & $0$\\
	\end{tabular}
	\captionof{table}{Empirical model for the Popescu-Rohlich box.}\label{empiricalmodel3}
\end{figure}
The corresponding bundle diagram is illustrated here:
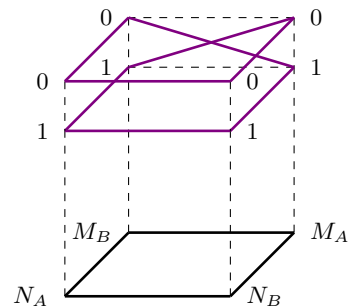
\begin{figure}[H]
	\centering
		\begin{tikzpicture}[scale=2.2]
		\coordinate (XA) at (0,0,0);
		\node at (XA) [left = 1mm of XA] {$N_A$};
		\coordinate (XB) at (1,0,0);
		\node at (XB) [right = 1mm of XB] {$N_B$};
		\coordinate (ZA) at (1,0,-1);
		\node at (ZA) [right = 1mm of ZA] {$M_A$};
		\coordinate (ZB) at (0,0,-1);
		\node at (ZB) [left = 1mm of ZB] {$M_B$};
		\draw[line width=1pt] (XA) 	-- (XB);
		\draw[line width=1pt] (XB) 	-- (ZA);
		\draw[line width=1pt] (ZA)	-- (ZB);
		\draw[line width=1pt] (ZB) 	-- (XA);
		\coordinate (1XA) at (0,1,0);
		\node at (1XA) [left = 1mm of 1XA] {$1$};
		\coordinate (1XB) at (1,1,0);
		\node at (1XB) [right = 1mm of 1XB] {$1$};
		\coordinate (1ZA) at (1,1,-1);
		\node at (1ZA) [right = 1mm of 1ZA] {$1$};
		\coordinate (1ZB) at (0,1,-1);
		\node at (1ZB) [left = 1mm of 1ZB] {$1$};
		\draw[dashed]  (XA) 	-- (1XA);
		\draw[dashed]  (XB) 	-- (1XB);
		\draw[dashed]  (ZA)	-- (1ZA);
		\draw[dashed]  (ZB) 	-- (1ZB);
		\draw[violet,line width=1pt] (1XA) 	-- (1XB);
		\draw[violet,line width=1pt] (1XB) 	-- (1ZA);
		\draw[violet,line width=1pt] (1ZB) 	-- (1XA);
		\draw[dashed] (1ZA) 	-- (1ZB);
		\coordinate (0XA) at (0,1.3,0);
		\node at (0XA) [left = 1mm of 0XA] {$0$};
		\coordinate (0XB) at (1,1.3,0);
		\node at (0XB) [right = 1mm of 0XB] {$0$};
		\coordinate (0ZA) at (1,1.3,-1);
		\node at (0ZA) [right = 1mm of 0ZA] {$0$};
		\coordinate (0ZB) at (0,1.3,-1);
		\node at (0ZB) [left = 1mm of 0ZB] {$0$};
		\draw[dashed]  (0XA) 	-- (1XA);
		\draw[dashed]  (0XB) 	-- (1XB);
		\draw[dashed]  (0ZA)	-- (1ZA);
		\draw[dashed]  (0ZB) 	-- (1ZB);
		\draw[violet,line width=1pt] (0XA) 	-- (0XB);
		\draw[violet,line width=1pt] (0XB) 	-- (0ZA);
		\draw[violet,line width=1pt] (0ZB) 	-- (0XA);
		\draw[dashed] (0ZA) 	-- (0ZB);
		\draw[violet,line width=1pt] (0ZA) 	-- (1ZB);
		\draw[violet,line width=1pt] (1ZA) 	-- (0ZB);
		\end{tikzpicture}
	\captionof{figure}{Bundle diagram for the empirical model of Table~\ref{empiricalmodel3}.}\label{qubitbundle3}
\end{figure}
This model is contextual and there is no local hidden variable model explaining the measurement outcomes.

At this point it is clear that the bundle diagram representation can only describe a \emph{possibilistic model} as it only contains information about what possible outcomes can occur, and not the corresponding probability. Nevertheless, we will see that such bundle diagrams are of great utility for representing the contextuality of an empirical model.  

In order to more precisely describe a useful graphical representation for an empirical model it is helpful to first introduce some of the mathematical terminology of sheaf theory. 

\subsection{Sheaf structure}
We denote by $M$ the vertices of an abstract simplicial complex $\Delta$ corresponding to a contextual model \footnote{It assumed throughout that the set of allowed observables $M$ for the model and the set of outcomes $O$ are finite.}. We give this set the discrete topology by defining any subset of $M$ to be open, i.e., the topology for $M$ is the power set $\mathcal{P}(M)$. (Note: this topological space is very different from the natural topological space $\Delta$ given by the Alexandroff topology.) We first build a structure called a \emph{presheaf of sets} over the space $(M,\mathcal{P}(M))$. This presheaf is intended to capture all possible measurement outcomes, regardless of whether the observables are jointly measurable or not (note that, depending on whether $U$ is a context, there may or may not be a joint measurement for the observables in $U$). We do this by associating to every subset $U\subset M$ the set of all functions from $U$ to $O$. This is a finite set, denoted $O^U$, with cardinality $|O|^{|U|}$. We call a function $s:U\rightarrow O$ a \emph{section over $U$}. A section in $\mathcal{E}(M)$ is called a \emph{global section}.
\begin{defi}
	If $s : U' \rightarrow O$ is a function and $U\subseteq U'$, we write
	$s|_{U} : U\rightarrow O$ for the restriction of $s$ to $U$. We define the \emph{restriction map} 
	\begin{equation*}
		\res_U^{U'} : \mathcal{E}(U')\rightarrow \mathcal{E}(U)
	\end{equation*} 
	via 
	\begin{equation*}
		\res_U^{U'}(s) \equiv s|_{U}.
	\end{equation*}
	The map $\res_U^{U'}$ endows $\mathcal{E}$ with the structure of a \emph{presheaf} as it enjoys the additional properties that 
	$\res_U^{U'}\circ\res_{U'}^{U''}=\res_U^{U''}$ for  $U\subset U'\subset U''$ and $\res_U^{U}=\id_U$.
\end{defi}
This \emph{presheaf $\mathcal{E}$ of events} has two important additional properties. 
\begin{defi}
\begin{enumerate}
	\item Locality: if $\{V_j\}$ is a covering of $U$ and if $s,t\in \mathcal{E}(U)$ are elements such that $s|_{V_j}=t|_{V_j}$ for all $V_j$ then $s=t$.  
	\item Gluing: suppose $\{V_j\}$ is an open covering of $U\subset M$ and we have a family of sections $\{s_j\in \mathcal{E}(V_j)\}$ with the property that for all $j,k$, 
	\begin{equation*}
		s_j|_{V_j \cap V_k}=s_k|_{V_j \cap V_k},
	\end{equation*}
	then there is a unique section $s\in \mathcal{E}(U)$ such that $s|_{V_j}=s_j$ for all $j$.
\end{enumerate}
\end{defi}
\sloppy
Any presheaf obeying locality and gluing is called a \emph{sheaf} which, in this case, we refer to as the \emph{sheaf of events}. The sheaf condition says that we can uniquely glue together compatible local data, where ``compatible'' means that measurement outcomes agree on observables common to contexts. 

Let $U\in \Delta$ be a context: we now exclude events with probabilities strictly equal to $0$ and write $\mathcal{F}(U)$ for the set of sections over the context. For example, each of the purple lines directly above the set of measurements $C_3$ in Figure \ref{qubitbundle} corresponds to a section in the set $\mathcal{E}(C_3)$:
\begin{align*}
		\mathcal{E}(C_3)=\{&s_{C_{3}00}(Z_A)=0, s_{C_{3}00}(X_B)=0;\\ 
		&s_{C_{3}01}(Z_A)=0, s_{C_{3}01}(X_B)=1;\\
		&s_{C_{3}10}(Z_A)=1, s_{C_{3}10}(X_B)=0;\\
		&s_{C_{3}11}(Z_A)=1, s_{C_{3}11}(X_B)=1\}.
\end{align*}
By only including sections with a nonzero probability of occurring we can obtain a subpresheaf $\mathcal{F}$ of $\mathcal{E}$. This is done by associating to sets $U\subset M$ built from the unions of two or more contexts only those sections which are compatible on the overlaps between the constituent contexts. (Abramsky and coworkers \cite{Abramsky2011} describe some additional properties of this subpresheaf, however, we do not need them for this discussion.)

A key role is played throughout this subject by the existence of \emph{global sections} \cite{Abramsky2011} because a global section glues together a compatible family on a presheaf and allocates a predetermined outcome to each measurement. In this case the measurement outcomes depend only on the measurement operator and not the context. If every section $s$ belongs to a compatible family then the model is called \emph{noncontextual}. Such a model is described by a hidden-variable model. If there is a section which does not belong to a compatible family then the model is said to be \emph{logically contextual}. If no section belongs to a compatible family then the model is \emph{strongly contextual}.  

Abramsky and his coworkers \cite{Abramsky2011} determined locality and non-contextuality in terms of the existence of global sections:
\begin{prop}
	The existence of a global section for an empirical model implies the existence of a non-contextual deterministic hidden-variable model which realizes it.\label{thmglobalsection}
\end{prop}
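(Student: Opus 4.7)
The plan is to interpret a global section for the empirical model as a probability distribution $d$ on the finite set $O^M$ of all functions $\lambda : M \to O$ whose marginals reproduce the empirical distributions: for every context $C \in \Delta$ and every joint outcome $\omega : C \to O$,
\begin{equation*}
  \sum_{\lambda\,:\,\lambda|_C = \omega} d(\lambda) \;=\; p_C(\omega).
\end{equation*}
Such a $d$ is exactly a global section of the distribution presheaf built over the sheaf of events $\mathcal{E}$, and is the object from which I would construct the hidden-variable model.

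I would then take the hidden variable itself to be $\lambda \in O^M$, distributed according to $d$, and define the deterministic response function by $P(o \mid m, \lambda) = 1$ if $\lambda(m) = o$ and $0$ otherwise. This is manifestly deterministic; moreover, because the response to an observable $m$ depends only on $\lambda$ and $m$ and never on the surrounding context, it satisfies the non-contextuality condition in precisely the sense relevant to the sheaf formulation.

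It remains to verify that the constructed model reproduces the empirical data. For every context $C \in \Delta$ and every joint outcome $\omega$, the predicted probability is
\begin{equation*}
  \sum_{\lambda \in O^M} d(\lambda) \prod_{m \in C} \bigl[\lambda(m) = \omega(m)\bigr] \;=\; \sum_{\lambda\,:\,\lambda|_C = \omega} d(\lambda) \;=\; p_C(\omega),
\end{equation*}
which is precisely the defining property of the global section $d$. Hence the model realizes the empirical distributions context by context.

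The main conceptual obstacle is pinning down the correct sheaf-theoretic meaning of \emph{global section} for an empirical model, namely a section of the distribution presheaf rather than a mere element of $\mathcal{E}(M)$; once this identification is made, the rest of the argument is the short marginalization check above. A secondary subtlety is that the sheaf conditions of locality and gluing are what guarantee that the restrictions of $d$ to overlapping contexts behave consistently, so that the deterministic assignment $\lambda$ can be interpreted context-independently on shared observables.
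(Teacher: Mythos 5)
Your construction is correct and is essentially the canonical Abramsky--Brandenburger argument: take the hidden variable to be a global assignment $\lambda\in O^M$, distribute it according to a distribution $d$ whose marginals reproduce the context-wise probabilities, and let the (deterministic, context-independent) response function be $\bigl[\lambda(m)=o\bigr]$; the marginalization identity you write down is exactly the verification needed. The paper itself gives no proof of this Proposition --- it is quoted from Abramsky and Brandenburger --- so there is nothing internal to compare against, but your argument is the one given in that reference. The one point worth flagging, which you already half-flag yourself, is that your reading of ``global section'' differs from the paper's stated definition: the paper defines a global section to be a single element of $\mathcal{E}(M)$, i.e.\ one function $s:M\to O$ compatible with the supports, whereas your $d$ is a global section of the \emph{distribution} presheaf. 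Under the paper's literal definition the Proposition, read with ``realizes'' meaning reproduction of the probability tables, would be false as stated --- a single deterministic assignment can only realize a delta-function empirical model --- so your reinterpretation is not an optional refinement but the step that makes the statement true; alternatively one must weaken ``realizes'' to the possibilistic sense, in which case a single global section of the support presheaf suffices and your distribution $d$ can be replaced by a point mass. Making explicit which of these two readings you are proving would close the only real gap in the write-up.
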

Thus, given an empirical model, we have to check whether we can extend \emph{every} section to a global one, i.e.\ we can realize every possible outcome combination for a context in a local hidden variable model. If this is the case the model is noncontextual. 
\begin{figure}[H]
	\centering
		\begin{tikzpicture}[scale=2.2]
		\coordinate (XA) at (0,0,0);
		\node at (XA) [left = 1mm of XA] {$X_A$};
		\coordinate (XB) at (1,0,0);
		\node at (XB) [right = 1mm of XB] {$X_B$};
		\coordinate (ZA) at (1,0,-1);
		\node at (ZA) [right = 1mm of ZA] {$Z_A$};
		\coordinate (ZB) at (0,0,-1);
		\node at (ZB) [left = 1mm of ZB] {$Z_B$};
		\draw[line width=1pt] (XA) 	-- (XB);
		\draw[line width=1pt] (XB) 	-- (ZA);
		\draw[line width=1pt] (ZA)	-- (ZB);
		\draw[line width=1pt] (ZB) 	-- (XA);
		\coordinate (1XA) at (0,1,0);
		\node at (1XA) [left = 1mm of 1XA] {$1$};
		\coordinate (1XB) at (1,1,0);
		\node at (1XB) [right = 1mm of 1XB] {$1$};
		\coordinate (1ZA) at (1,1,-1);
		\node at (1ZA) [right = 1mm of 1ZA] {$1$};
		\coordinate (1ZB) at (0,1,-1);
		\node at (1ZB) [left = 1mm of 1ZB] {$1$};
		\draw[dashed]  (XA) 	-- (1XA);
		\draw[dashed]  (XB) 	-- (1XB);
		\draw[dashed]  (ZA)	-- (1ZA);
		\draw[dashed]  (ZB) 	-- (1ZB);
		\draw[violet,line width=1pt] (1XA) 	-- (1XB);
		\draw[dashed] (1XB) 	-- (1ZA);
		\draw[dashed] (1ZB) 	-- (1XA);
		\draw[dashed] (1ZA) 	-- (1ZB);
		\coordinate (0XA) at (0,1.3,0);
		\node at (0XA) [left = 1mm of 0XA] {$0$};
		\coordinate (0XB) at (1,1.3,0);
		\node at (0XB) [right = 1mm of 0XB] {$0$};
		\coordinate (0ZA) at (1,1.3,-1);
		\node at (0ZA) [right = 1mm of 0ZA] {$0$};
		\coordinate (0ZB) at (0,1.3,-1);
		\node at (0ZB) [left = 1mm of 0ZB] {$0$};
		\draw[dashed]  (0XA) 	-- (1XA);
		\draw[dashed]  (0XB) 	-- (1XB);
		\draw[dashed]  (0ZA)	-- (1ZA);
		\draw[dashed]  (0ZB) 	-- (1ZB);
		\draw[dashed] (0XA) 	-- (0XB);
		\draw[dashed] (0XB) 	-- (0ZA);
		\draw[dashed] (0ZB) 	-- (0XA);
		\draw[violet,line width=1pt] (0ZA) 	-- (0ZB);
		\draw[violet,line width=1pt] (1XA) 	-- (0ZB);
		\draw[violet,line width=1pt] (1XB) 	-- (0ZA);
		\end{tikzpicture}
	\captionof{figure}{Example for a global section.}\label{globalsectionqubit}
\end{figure}
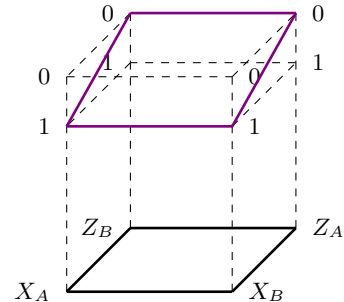
A global section in the bundle diagram is represented by a closed path traversing all the fibers exactly once. This observation allows us to graphically determine the contextuality of an empirical model. 
The above discussed model is noncontextual, because every section is extendable to a global one. An example of a global section is depicted in Figure \ref{globalsectionqubit}.\\

\section{Contextuality as a resource for (quantum) computation}
In a key paper Anders and Browne \cite{Anders2009} argued that contextuality is the fundamental resource enabling (quantum) computation. This argument exploits joint measurements of a Greenberger-Horne-Zeilinger-type state $\ket{\Psi}=\frac{1}{\sqrt{2}}(\ket{001}-\ket{110})$ as a resource. By carrying out measurements of this state a classical \emph{parity computer} only able to implement NOT$=\ket{1}\bra{0}+\ket{0}\bra{1}$ and CNOT$=\ket{1}\bra{1} \otimes X + \ket{0}\bra{0} \otimes \mathbbm{1}$ operations may be augmented to carry out any deterministic computation. The canonical gate enabled by measurements of the GHZ state is the NAND-gate: 
\begin{figure}[H]	
	\centering
\begin{tabular}{|c | c| c |}
	\hline 
	a & b & NAND \\ 
	\hline
	0 & 0 & 1 \\ 
	0 & 1 & 1 \\
	1 & 0 & 1 \\
	1 & 1 & 0 \\
	\hline 
\end{tabular}\label{truthtable}
\captionof{table}{Truth table for NAND-gate.}
\end{figure}
By performing three measurements on the GHZ state it is possible to deterministically implement this gate \cite{Browne2006}. Our objective for this section is to explain how to represent the resulting contextual model via a generalised bundle diagram and explore its contextuality in terms of the nonexistence of global sections.

We build a joint measurability structure for the Anders-Browne example as follows. We take for the set of observables $M=\{X_A,X_B,X_C,Y_A,Y_B,Y_C\}$ with  $Y=-i\ket{0}\bra{1}+i\ket{1}\bra{0}$. The choice of which measurements Alice and Bob perform depend on the inputs $a$ and $b$ of the NAND-gate they wish to compute. Charlie measures his observable according to the value of a third supplemented input $a\oplus b$. If the input is $0$ an $X$-measurement is carried out and for $1$ a $Y$-measurement is made. We can summarise this procedure in terms of the following four contexts \begin{align*}
C_1&=\{X_A,X_B,X_C\}, \\C_2&=\{X_A,Y_B,Y_C\}, \\C_3&=\{Y_A,X_B,Y_C\}, \quad\text{and}\\C_4&=\{Y_A,Y_B,X_C\}.
\end{align*} We use the basis
\begin{alignat*}{4}
\ket{+}&=\frac{1}{\sqrt{2}}(\ket{0}+\ket{1}),&\quad\text{and}\quad&
\ket{-}&=\frac{1}{\sqrt{2}}(\ket{0}-\ket{1})\\
\intertext{for the $X$-measurements and}
\ket{\circlearrowleft}&=\frac{1}{\sqrt{2}}(\ket{0}+i\ket{1}),&\quad\text{and}\quad&
\ket{\circlearrowright}&=\frac{1}{\sqrt{2}}(\ket{0}-i\ket{1}), 
\end{alignat*}
for the $Y$-measurements. Measuring these contexts leads to the following empirical model.
\begin{figure}[H]
	\centering
	\begin{tabular}{c |c c c | c c c c c c c c c c }
		&$A$ & $B$ & $C$ & $ 0 0 0 $ & $ 0 0 1 $ & $ 0 1 0 $ & $ 0 1 1 $ & $ 1 0 0 $ & $ 1 0 1 $ & $ 1 1 0 $ & $ 1 1 1 $\\
		\hline
		$C_1$ &$X_A$ & $X_B$ & $X_C$&   $0$ &  $1/4$  &   $1/4$ &   $0$ &  $1/4$  &   $0$ &   $0$ &  $1/4$\\
		$C_2$ &$X_A$ & $Y_B$  &   $Y_C$ &   $0$ &  $1/4$  &   $1/4$ &   $0$ &  $1/4$  &   $0$ &   $0$ &  $1/4$\\
		$C_3$ &$Y_A$ & $X_B$  &   $Y_C$ &   $0$ &  $1/4$  &   $1/4$ &   $0$ &  $1/4$  &   $0$ &   $0$ &  $1/4$\\
		$C_4$ &$Y_A$ & $Y_B$  &   $X_C$ & $1/4$ & $0$ & $0$ & $1/4$ & $0$ & $1/4$ & $1/4$ & $0$
	\end{tabular}
	\captionof{table}{Empirical model for the Anders-Browne example.}
\end{figure}

The abstract simplicial complex $\Delta$ corresponding to this example is depicted in Figure \ref{bottom}.
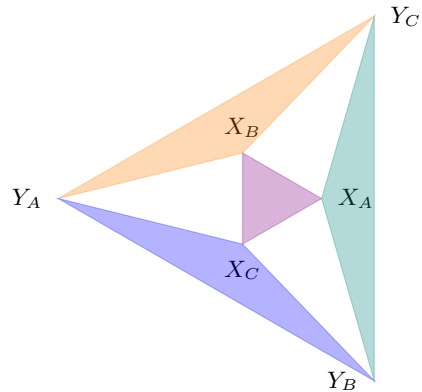
\begin{figure}[H]
	\centering
	\begin{tikzpicture}[scale=1.4]
	\coordinate (XA) at (0.5,0);
	\node at (XA) [right = 1mm of XA] {$X_A$};
	\coordinate (XB) at (-.25,0.43301);
	\node at (XB) [above = 1mm of XB] {$X_B$};
	\coordinate (XC) at (-.25,-0.43301);
	\node at (XC) [below = 1mm of XC] {$X_C$};
	\coordinate (YA) at (-2,0);
	\node at (YA) [left = 1mm of YA] {$Y_A$};
	\coordinate (YB) at (1,-1.7321);
	\node at (YB) [left = 1mm of YB] {$Y_B$};
	\coordinate (YC) at (1,1.7321);
	\node at (YC) [right = 1mm of YC] {$Y_C$};
	\draw[fill,opacity=0.3, violet] (XA) 	-- (XB) --(XC) -- cycle;
	\draw[fill,opacity=0.3, teal] (XA) 	-- (YB) --(YC) -- cycle;
	\draw[fill,opacity=0.3,orange] (YA) 	-- (XB) --(YC) -- cycle;
	\draw[fill,opacity=0.3, blue] (YA) 	-- (YB) --(XC) -- cycle;
	\end{tikzpicture}
	\captionof{figure}{The abstract simplicial complex forming the base for the Anders-Browne example.} \label{bottom}
\end{figure}

\begin{figure}[H]
	\centering
	\begin{tikzpicture}[scale=1.8]
	\coordinate (XA) at (0.5,0,0);
	\node at (XA) [right = 1mm of XA] {$X_A$};
	\coordinate (XB) at (-.25,0,0.43301);
	\node at (XB) [below = 1mm of XB] {$X_B$};
	\coordinate (XC) at (-.25,0,-0.43301);
	\node at (XC) [below = 1mm of XC] {$X_C$};
	\coordinate (YA) at (-2,0,0);
	\node at (YA) [left = 1mm of YA] {$Y_A$};
	\coordinate (YB) at (1,0,-1.7321);
	\node at (YB) [right = 1mm of YB] {$Y_B$};
	\coordinate (YC) at (1,0,1.7321);
	\node at (YC) [right = 1mm of YC] {$Y_C$};
	\draw[fill,opacity=0.3] (XA) 	-- (XB) --(XC) -- cycle;
	\draw[fill,opacity=0.3] (XA) 	-- (YB) --(YC) -- cycle;
	\draw[fill,opacity=0.3] (YA) 	-- (XB) --(YC) -- cycle;
	\draw[fill,opacity=0.3] (YA) 	-- (YB) --(XC) -- cycle;
	\coordinate (XA1) at (0.5,1,0);
	\node at (XA1) [right = 1mm of XA1] {$1$};
	\coordinate (XB1) at (-.25,1,0.43301);
	\node at (XB1) [left = 1mm of XB1] {$1$};
	\coordinate (XC1) at (-.25,1,-0.43301);
	\node at (XC1) [left = 1mm of XC1] {$1$};
	\coordinate (YA1) at (-2,1,0);
	\node at (YA1) [left = 1mm of YA1] {$1$};
	\coordinate (YB1) at (1,1,-1.7321);
	\node at (YB1) [right = 1mm of YB1] {$1$};
	\coordinate (YC1) at (1,1,1.7321);
	\node at (YC1) [right = 1mm of YC1] {$1$};
	\draw[dashed,gray] (XA1) 	-- (XB1) --(XC1) -- cycle;
	\draw[dashed,gray] (XA1) 	-- (YB1) --(YC1) -- cycle;
	\draw[dashed,gray] (YA1) 	-- (XB1) --(YC1) -- cycle;
	\draw[dashed,gray] (YA1) 	-- (YB1) --(XC1) -- cycle;
	\coordinate (XA0) at (0.5,2.6,0);
	\node at (XA0) [right = 1mm of XA0] {$0$};
	\coordinate (XB0) at (-.25,2.6,0.43301);
	\node at (XB0) [left = 1mm of XB0] {$0$};
	\coordinate (XC0) at (-.25,2.6,-0.43301);
	\node at (XC0) [left = 1mm of XC0] {$0$};
	\coordinate (YA0) at (-2,2.6,0);
	\node at (YA0) [left = 1mm of YA0] {$0$};
	\coordinate (YB0) at (1,2.6,-1.7321);
	\node at (YB0) [right = 1mm of YB0] {$0$};
	\coordinate (YC0) at (1,2.6,1.7321);
	\node at (YC0) [right = 1mm of YC0] {$0$};
	\draw[dashed,gray] (XA0) 	-- (XB0) --(XC0) -- cycle;
	\draw[dashed,gray] (XA0) 	-- (YB0) --(YC0) -- cycle;
	\draw[dashed,gray] (YA0) 	-- (XB0) --(YC0) -- cycle;
	\draw[dashed,gray] (YA0) 	-- (YB0) --(XC0) -- cycle;
	\draw[dashed] (XA) 	-- (XA0);
	\draw[dashed] (XB) 	-- (XB0);
	\draw[dashed] (XC) 	-- (XC0);
	\draw[dashed] (YA) 	-- (YA0);
	\draw[dashed] (YB) 	-- (YB0);
	\draw[dashed] (YC) 	-- (YC0);	
	\draw[blue,fill,opacity=0.3] (YA0) 	-- (YB0) --(XC0) -- cycle;
	\draw[blue,fill,opacity=0.3] (YA0) 	-- (YB1) --(XC1) -- cycle;
	\draw[blue,fill,opacity=0.3] (YA1) 	-- (YB0) --(XC1) -- cycle;
	\draw[blue,fill,opacity=0.3] (YA1) 	-- (YB1) --(XC0) -- cycle;
	\draw[violet,fill,opacity=0.3] (XA0) 	-- (XB0) --(XC1) -- cycle;
	\draw[violet,fill,opacity=0.3] (XA0) 	-- (XB1) --(XC0) -- cycle;
	\draw[violet,fill,opacity=0.3] (XA1) 	-- (XB0) --(XC0) -- cycle;
	\draw[violet,fill,opacity=0.3] (XA1) 	-- (XB1) --(XC1) -- cycle;
	\draw[teal,fill,opacity=0.3] (XA0) 	-- (YB0) --(YC1) -- cycle;
	\draw[teal,fill,opacity=0.3] (XA0) 	-- (YB1) --(YC0) -- cycle;
	\draw[teal,fill,opacity=0.3] (XA1) 	-- (YB0) --(YC0) -- cycle;
	\draw[teal,fill,opacity=0.3] (XA1) 	-- (YB1) --(YC1) -- cycle;
	\draw[orange,fill,opacity=0.3] (YA0) 	-- (XB0) --(YC1) -- cycle;
	\draw[orange,fill,opacity=0.3] (YA0) 	-- (XB1) --(YC0) -- cycle;
	\draw[orange,fill,opacity=0.3] (YA1) 	-- (XB0) --(YC0) -- cycle;
	\draw[orange,fill,opacity=0.3] (YA1) 	-- (XB1) --(YC1) -- cycle;
	\end{tikzpicture}\captionof{figure}{Bundle diagram with all contexts for the Anders-Browne example.}
\end{figure}
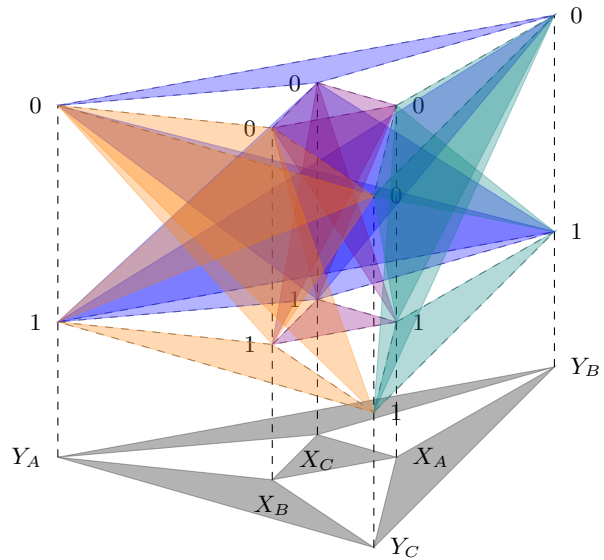
We obtain a generalised bundle diagram for the Anders-Browne example by depicting the possible outcomes of a joint measurement of a context not as a line, but instead as a triangular facet. We are lucky that in three dimensions we can always draw a surface through three points. To build, in a similar way, a bundle diagram for a contextual model involving four or more agents would require a higher dimensional ambient space. In the following picture the violet triangles represent the sections above the context $C_1$. The context $C_2$ is marked with teal triangles. The sections related to $C_3$ are depicted in orange, the ones belonging to $C_4$ in blue. Points in the stalk above a measurement are labelled according to the observable and the outcome, for example, the point above $X_A$ corresponding to outcome $1$ is denoted $X_A1$.

We now explore the contextuality of this model: if there exists a triangle which cannot be continued to a global section then the model is contextual. We argue this is the case by starting with the violet triangle $X_A1-X_B1-X_C1$ and exploring all potential compatible ways of extending this section to a global section. We illustrate this procedure via a tree diagram which graphically enumerates all the possible extensions. 
\onecolumngrid
\begin{center}
	\begin{tikzpicture}
		[level distance=1.3cm,
		level 1/.style={sibling distance=6cm},
		level 2/.style={sibling distance=3.5cm},
		level 3/.style={sibling distance=1.2cm},
		level 4/.style={sibling distance=.6cm},
		triangle/.style = {regular polygon, regular polygon sides=3 }]
		\tikzstyle{every node}=[font=\small]
		\node {
	\begin{tikzpicture}[level 1/.style={level distance=2.8cm,sibling distance=0.3cm,every child/.style={edge from parent/.style={draw,white}}},level 2/.style={level distance=1.3cm,sibling distance=0.3cm,every child/.style={edge from parent/.style={draw,black}}},level 3/.style={sibling distance=1cm},grow'=up]
		\node{} child{node{$X_A1$}child{node[triangle,rotate=180,fill=violet,opacity=0.3,text width=2mm]{}child {node [rotate=270]{$X_B1*$}}child{node[rotate=270] {$X_C1**$}}}};	
           \end{tikzpicture}}
		child {node[triangle,fill=violet,opacity=0.3] { }
			child {node {$X_B0\text{ }+$}}
			child {node[left=1cm] {$X_C0\text{ }+$}}
		}
		child {node[left=2cm,triangle,fill=teal,opacity=0.3] { }
			child {node {$Y_B0$}
				child {node[triangle,fill=teal,opacity=0.3] { }
					child {node[rotate=270] {$X_A0\text{ }+$}}
					child {node[rotate=270] {$Y_C1$}}
				}
				child {node[triangle,fill=blue,opacity=0.3] { }
					child {node[rotate=270] {$Y_A0$}}
					child {node[rotate=270] {$X_C0\text{ }+$}}
				}
				child {node[triangle,fill=blue,opacity=0.3] { }
					child {node[rotate=270] {$Y_A1$}
					child[orange,line width=1pt]}
					child {node[rotate=270] {$X_C1^{**}$}}
				}
			}
			child {node {$Y_C0$}
				child {node[triangle,fill=teal,opacity=0.3] { }
					child {node[rotate=270] {$X_A0\text{ }+$}}
					child {node[rotate=270] {$Y_B1$}}
				}
				child {node[triangle,fill=orange,opacity=0.3] { }
					child {node[rotate=270] {$Y_A0$}
					child[orange,line width=1pt]}
					child {node[rotate=270] {$X_B1^*$}}
				}
				child {node[triangle,fill=orange,opacity=0.3] { }
					child {node[rotate=270] {$Y_A1$}}
					child {node[rotate=270] {$X_B0\text{ }+$}}
				}
			}
		}
		child {node[left=1cm,triangle,fill=teal,opacity=0.3] { }
			child {node {$Y_B1$}
				child {node[triangle,fill=teal,opacity=0.3] { }
					child {node[rotate=270] {$X_A0\text{ }+$}}
					child {node[rotate=270] {$Y_C0$}}
				}
				child {node[triangle,fill=blue,opacity=0.3] { }
					child {node[rotate=270] {$Y_A0$}
						child[blue,line width=1pt]}
					child {node[rotate=270] {$X_C1^{**}$}}
				}
				child {node[triangle,fill=blue,opacity=0.3] { }
					child {node[rotate=270] {$Y_A1$}}
					child {node[rotate=270] {$X_C0\text{ }+$}}
				}
			}
			child {node {$Y_C1$}
			child {node[triangle,fill=teal,opacity=0.3] { }
				child {node[rotate=270] {$X_A0\text{ }+$}}
				child {node[rotate=270] {$Y_B0$}}
			}
			child {node[triangle,fill=orange,opacity=0.3] { }
				child {node[rotate=270] {$Y_A0$}}
				child {node[rotate=270] {$X_B0\text{ }+$}}
			}
			child {node[triangle,fill=orange,opacity=0.3] { }
				child {node[rotate=270] {$Y_A1$}
					child[blue,line width=1pt]}
				child {node[rotate=270] {$X_B1^*$}}
			}
			}
		};
		\end{tikzpicture}
	\begin{tikzpicture}\end{tikzpicture}
	
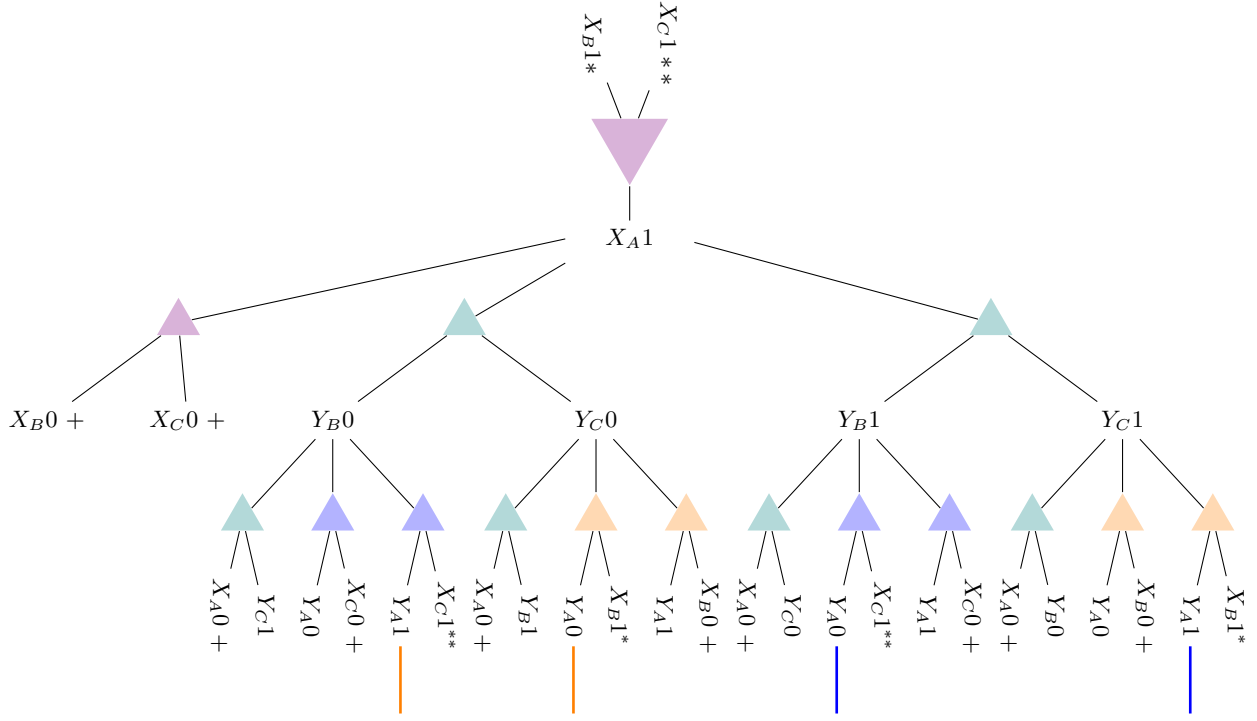
\captionof{figure}{Trying to build a global section with the triangle $X_A1-X_B1-X_C1$.} \label{firsttree}
\end{center}
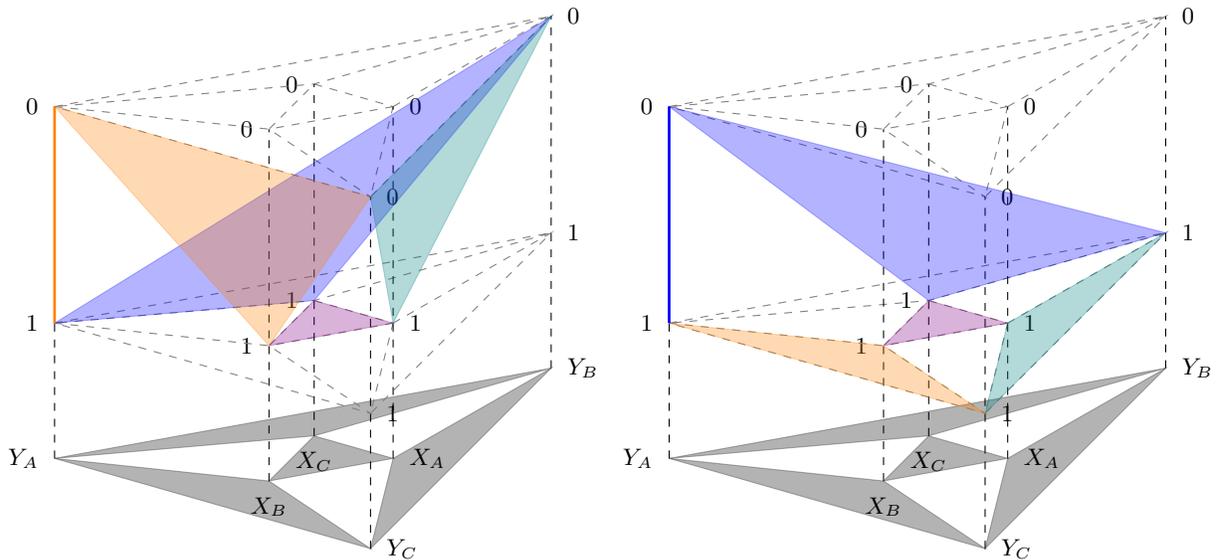
\begin{figure}[H]
	\centering
\begin{tikzpicture}[scale=1.8]
\coordinate (XA) at (0.5,0,0);
\node at (XA) [right = 1mm of XA] {$X_A$};
\coordinate (XB) at (-.25,0,0.43301);
\node at (XB) [below = 1mm of XB] {$X_B$};
\coordinate (XC) at (-.25,0,-0.43301);
\node at (XC) [below = 1mm of XC] {$X_C$};
\coordinate (YA) at (-2,0,0);
\node at (YA) [left = 1mm of YA] {$Y_A$};
\coordinate (YB) at (1,0,-1.7321);
\node at (YB) [right = 1mm of YB] {$Y_B$};
\coordinate (YC) at (1,0,1.7321);
\node at (YC) [right = 1mm of YC] {$Y_C$};
\draw[fill,opacity=0.3] (XA) 	-- (XB) --(XC) -- cycle;
\draw[fill,opacity=0.3] (XA) 	-- (YB) --(YC) -- cycle;
\draw[fill,opacity=0.3] (YA) 	-- (XB) --(YC) -- cycle;
\draw[fill,opacity=0.3] (YA) 	-- (YB) --(XC) -- cycle;
\coordinate (XA1) at (0.5,1,0);
\node at (XA1) [right = 1mm of XA1] {$1$};
\coordinate (XB1) at (-.25,1,0.43301);
\node at (XB1) [left = 1mm of XB1] {$1$};
\coordinate (XC1) at (-.25,1,-0.43301);
\node at (XC1) [left = 1mm of XC1] {$1$};
\coordinate (YA1) at (-2,1,0);
\node at (YA1) [left = 1mm of YA1] {$1$};
\coordinate (YB1) at (1,1,-1.7321);
\node at (YB1) [right = 1mm of YB1] {$1$};
\coordinate (YC1) at (1,1,1.7321);
\node at (YC1) [right = 1mm of YC1] {$1$};
\draw[dashed,gray] (XA1) 	-- (XB1) --(XC1) -- cycle;
\draw[dashed,gray] (XA1) 	-- (YB1) --(YC1) -- cycle;
\draw[dashed,gray] (YA1) 	-- (XB1) --(YC1) -- cycle;
\draw[dashed,gray] (YA1) 	-- (YB1) --(XC1) -- cycle;
\coordinate (XA0) at (0.5,2.6,0);
\node at (XA0) [right = 1mm of XA0] {$0$};
\coordinate (XB0) at (-.25,2.6,0.43301);
\node at (XB0) [left = 1mm of XB0] {$0$};
\coordinate (XC0) at (-.25,2.6,-0.43301);
\node at (XC0) [left = 1mm of XC0] {$0$};
\coordinate (YA0) at (-2,2.6,0);
\node at (YA0) [left = 1mm of YA0] {$0$};
\coordinate (YB0) at (1,2.6,-1.7321);
\node at (YB0) [right = 1mm of YB0] {$0$};
\coordinate (YC0) at (1,2.6,1.7321);
\node at (YC0) [right = 1mm of YC0] {$0$};
\draw[dashed,gray] (XA0) 	-- (XB0) --(XC0) -- cycle;
\draw[dashed,gray] (XA0) 	-- (YB0) --(YC0) -- cycle;
\draw[dashed,gray] (YA0) 	-- (XB0) --(YC0) -- cycle;
\draw[dashed,gray] (YA0) 	-- (YB0) --(XC0) -- cycle;
\draw[dashed] (XA) 	-- (XA0);
\draw[dashed] (XB) 	-- (XB0);
\draw[dashed] (XC) 	-- (XC0);
\draw[dashed] (YA) 	-- (YA0);
\draw[dashed] (YB) 	-- (YB0);
\draw[dashed] (YC) 	-- (YC0);
\draw[blue,fill,opacity=0.3] (YA1) 	-- (YB0) --(XC1) -- cycle;
\draw[violet,fill,opacity=0.3] (XA1) 	-- (XB1) --(XC1) -- cycle;
\draw[teal,fill,opacity=0.3] (XA1) 	-- (YB0) --(YC0) -- cycle;
\draw[orange,fill,opacity=0.3] (YA0) 	-- (XB1) --(YC0) -- cycle;
\draw[orange,line width=1pt] (YA1) 	-- (YA0);
\end{tikzpicture}
	\begin{tikzpicture}[scale=1.8]
\coordinate (XA) at (0.5,0,0);
\node at (XA) [right = 1mm of XA] {$X_A$};
\coordinate (XB) at (-.25,0,0.43301);
\node at (XB) [below = 1mm of XB] {$X_B$};
\coordinate (XC) at (-.25,0,-0.43301);
\node at (XC) [below = 1mm of XC] {$X_C$};
\coordinate (YA) at (-2,0,0);
\node at (YA) [left = 1mm of YA] {$Y_A$};
\coordinate (YB) at (1,0,-1.7321);
\node at (YB) [right = 1mm of YB] {$Y_B$};
\coordinate (YC) at (1,0,1.7321);
\node at (YC) [right = 1mm of YC] {$Y_C$};
\draw[fill,opacity=0.3] (XA) 	-- (XB) --(XC) -- cycle;
\draw[fill,opacity=0.3] (XA) 	-- (YB) --(YC) -- cycle;
\draw[fill,opacity=0.3] (YA) 	-- (XB) --(YC) -- cycle;
\draw[fill,opacity=0.3] (YA) 	-- (YB) --(XC) -- cycle;
\coordinate (XA1) at (0.5,1,0);
\node at (XA1) [right = 1mm of XA1] {$1$};
\coordinate (XB1) at (-.25,1,0.43301);
\node at (XB1) [left = 1mm of XB1] {$1$};
\coordinate (XC1) at (-.25,1,-0.43301);
\node at (XC1) [left = 1mm of XC1] {$1$};
\coordinate (YA1) at (-2,1,0);
\node at (YA1) [left = 1mm of YA1] {$1$};
\coordinate (YB1) at (1,1,-1.7321);
\node at (YB1) [right = 1mm of YB1] {$1$};
\coordinate (YC1) at (1,1,1.7321);
\node at (YC1) [right = 1mm of YC1] {$1$};
\draw[dashed,gray] (XA1) 	-- (XB1) --(XC1) -- cycle;
\draw[dashed,gray] (XA1) 	-- (YB1) --(YC1) -- cycle;
\draw[dashed,gray] (YA1) 	-- (XB1) --(YC1) -- cycle;
\draw[dashed,gray] (YA1) 	-- (YB1) --(XC1) -- cycle;
\coordinate (XA0) at (0.5,2.6,0);
\node at (XA0) [right = 1mm of XA0] {$0$};
\coordinate (XB0) at (-.25,2.6,0.43301);
\node at (XB0) [left = 1mm of XB0] {$0$};
\coordinate (XC0) at (-.25,2.6,-0.43301);
\node at (XC0) [left = 1mm of XC0] {$0$};
\coordinate (YA0) at (-2,2.6,0);
\node at (YA0) [left = 1mm of YA0] {$0$};
\coordinate (YB0) at (1,2.6,-1.7321);
\node at (YB0) [right = 1mm of YB0] {$0$};
\coordinate (YC0) at (1,2.6,1.7321);
\node at (YC0) [right = 1mm of YC0] {$0$};
\draw[dashed,gray] (XA0) 	-- (XB0) --(XC0) -- cycle;
\draw[dashed,gray] (XA0) 	-- (YB0) --(YC0) -- cycle;
\draw[dashed,gray] (YA0) 	-- (XB0) --(YC0) -- cycle;
\draw[dashed,gray] (YA0) 	-- (YB0) --(XC0) -- cycle;
\draw[dashed] (XA) 	-- (XA0);
\draw[dashed] (XB) 	-- (XB0);
\draw[dashed] (XC) 	-- (XC0);
\draw[dashed] (YA) 	-- (YA0);
\draw[dashed] (YB) 	-- (YB0);
\draw[dashed] (YC) 	-- (YC0);
\draw[violet,fill,opacity=0.3] (XA1) 	-- (XB1) --(XC1) -- cycle;
\draw[teal,fill,opacity=0.3] (XA1) 	-- (YB1) --(YC1) -- cycle;
\draw[orange,fill,opacity=0.3] (YA1) 	-- (XB1) --(YC1) -- cycle;
\draw[blue,fill,opacity=0.3] (YA0) 	-- (YB1) --(XC1) -- cycle;
\draw[blue,line width=1pt] (YA1) 	-- (YA0);
\end{tikzpicture}
	\captionof{figure}{Examples for failed global sections for the Anders-Browne example.}\label{twobundles}
\end{figure}
\twocolumngrid

We draw a node (connected to our start triangle) in the tree diagram for the assignment of $1$ to $X_A$, i.e., the point $X_A1$ in the stalk above $X_A$. We then draw leaves connected to nodes labelled by triangles containing $X_A1$. Each such triangle node has leaves corresponding to compatible assignments and so on. Thus our tree is comprised of alternating layers of triangles and assignments. We tag the leaves of a node with a $+$ symbol when they are incompatible. Since in a global section only one outcome per measurement is allowed and our start triangle contains $X_A1$, $X_B1$, and $X_C1$ we stop when we meet any of $X_A0$, $X_B0$, or $X_C0$. 

Of course we also have to connect the edges $X_B1$ and $X_C1$ of the starting triangle correctly. When we meet them again via this extension process we mark the nodes with $*$ or $**$. We can see that there is a possibility to connect the edges $X_B1$ and $X_C1$ of the start triangle and touch all measurements $X_A,X_B,Y_A$ and $Y_B$ with ones, unless $Y_A1\neq Y_A0$. This situation is tagged with an orange line. A similar case corresponds to the teal lines. We can see graphically how building a global section fails in these two cases in Figure \ref{twobundles}.

Using such a tree diagram we can also see that it is not possible to build a complete global section with the triangle $X_A1-X_B0-X_C0$.

According to this argument the model is logically contextual. We can further deduce that the model is strongly contextual, i.e.\ there is no global section at all. To this end we define a function $f: \text{Support}[P(O)] \rightarrow \{\pm1\}$ for each context, where $P(O)$ is the possibility to get the outcome $O=(a,b,c)$ via the joint measurement of the context. Considering $C_1$, the function is defined by
\begin{align*}
f_{C_1}(X_A=a,X_B=b,X_C=c)&=(-1)^a(-1)^b(-1)^c\\
\intertext{and produces the following codomain}
f_{C_1}(0,0,1)&=-1\\
f_{C_1}(0,1,0)&=-1\\
f_{C_1}(1,0,0)&=-1\\
f_{C_1}(1,1,1)&=-1\text{.}
\end{align*}
Defining the other functions analogously and computing the codomains we observe that the codomain for one particular context is either $\{-1\}$ or $\{1\}$. So we can summarize the functions $f_{C_i}$ for $i\in \{1,\dots,4\}$ with an overall parity function $F: C_i \rightarrow \{\pm1\}$ defined on $\{C_i|i\in \{1,\dots,4\}\}$. We depict the values of the parity function for each context in Figure \ref{bottomparity}.
\begin{figure}[H]
	\centering
	\begin{tikzpicture}[scale=1.4]
	\coordinate (XA) at (0.5,0);
	\node at (XA) [above = 1mm of XA] {$X_A$};
	\coordinate (XB) at (-.25,0.43301);
	\node at (XB) [right = 1mm of XB] {$X_B$};
	\coordinate (XC) at (-.25,-0.43301);
	\node at (XC) [right = 1mm of XC] {$X_C$};
	\coordinate (YA) at (-2,0);
	\node at (YA) [left = 1mm of YA] {$Y_A$};
	\coordinate (YB) at (1,-1.7321);
	\node at (YB) [left = 1mm of YB] {$Y_B$};
	\coordinate (YC) at (1,1.7321);
	\node at (YC) [right = 1mm of YC] {$Y_C$};
	\draw[fill,opacity=0.3, violet] (XA) 	-- (XB) --(XC) -- cycle;
	\draw[fill,opacity=0.3, teal] (XA) 	-- (YB) --(YC) -- cycle;
	\draw[fill,opacity=0.3, orange] (YA) 	-- (XB) --(YC) -- cycle;
	\draw[fill,opacity=0.3, blue] (YA) 	-- (YB) --(XC) -- cycle;
	\draw(0:0) node{-1};
	\draw(120:.75) node{-1};
	\draw(240:.75) node{1};
	\draw(0:.75) node{-1};
	\end{tikzpicture}
	\captionof{figure}{Values of parity function.} \label{bottomparity}
\end{figure}
It follows that all triangles have an odd number of edges on the $1$-plane, except for the blue ones. So it is not possible to built any global section. Also the structure of the NAND gate becomes clear in this approach, i.e. 
\begin{align*}
X_AX_BX_C \ket{\Psi}&=-\ket{\Psi}\\
X_AY_BY_C \ket{\Psi}&=-\ket{\Psi}\\
Y_AX_BY_C \ket{\Psi}&=-\ket{\Psi}\\
Y_AY_BX_C \ket{\Psi}&=\ket{\Psi}\text{.}
\end{align*}
We were not able to find a global section using the state $\ket{\Psi}=\frac{1}{\sqrt{2}}(\ket{001}-\ket{110})$. As a next step we want to figure out whether there exists a state which produces a model with a global section using the same contexts. Thus we search for a model which fulfills $F(C_i)=-1$ for an even number of contexts $C_i$. An appropriate set of conditions on $\ket{\Psi}$ is depicted in Figure \ref{bottomparitysearch} and requires
\begin{align*}
X_AX_BX_C \ket{\Psi}&=-\ket{\Psi}\\
X_AY_BY_C \ket{\Psi}&=\ket{\Psi}\\
Y_AX_BY_C \ket{\Psi}&=-\ket{\Psi}\\
Y_AY_BX_C \ket{\Psi}&=\ket{\Psi}\text{.}
\end{align*}
\begin{figure}[H]
	\centering
	\begin{tikzpicture}[scale=1.4]
	\coordinate (XA) at (0.5,0);
	\node at (XA) [above = 1mm of XA] {$X_A$};
	\coordinate (XB) at (-.25,0.43301);
	\node at (XB) [right = 1mm of XB] {$X_B$};
	\coordinate (XC) at (-.25,-0.43301);
	\node at (XC) [right = 1mm of XC] {$X_C$};
	\coordinate (YA) at (-2,0);
	\node at (YA) [left = 1mm of YA] {$Y_A$};
	\coordinate (YB) at (1,-1.7321);
	\node at (YB) [left = 1mm of YB] {$Y_B$};
	\coordinate (YC) at (1,1.7321);
	\node at (YC) [right = 1mm of YC] {$Y_C$};
	\draw[fill,opacity=0.3, violet] (XA) 	-- (XB) --(XC) -- cycle;
	\draw[fill,opacity=0.3, teal] (XA) 	-- (YB) --(YC) -- cycle;
	\draw[fill,opacity=0.3, orange] (YA) 	-- (XB) --(YC) -- cycle;
	\draw[fill,opacity=0.3, blue] (YA) 	-- (YB) --(XC) -- cycle;
	\draw(0:0) node{-1};
	\draw(120:.75) node{-1};
	\draw(240:.75) node{1};
	\draw(0:.75) node{1};
	\end{tikzpicture}
	\captionof{figure}{Values of parity function.} \label{bottomparitysearch}
\end{figure}
To build such a state $\ket{\Psi}$ fulfilling these conditions we construct the projectors
\begin{align*}
P_1&=\frac{\mathbbm{1}-X_AX_BX_C}{2}\\
P_2&=\frac{\mathbbm{1}+X_AY_BY_C}{2}\\
P_3&=\frac{\mathbbm{1}-Y_AY_BX_C}{2}\\
P_4&=\frac{\mathbbm{1}+X_AX_BY_C}{2}\text{.}
\end{align*}
If $P_1\ket{\Psi}=P_2\ket{\Psi}=P_3\ket{\Psi}=P_4\ket{\Psi}=\ket{\Psi}$ then $\text{tr}(P_1P_2P_3P_4)=1$ and therefore $P_1P_2P_3P_4 =\ket{\Psi}\bra{\Psi}$ \cite{Nielsen2000}. In the case depicted in Figure \ref{bottomparitysearch} we get $P_1P_2P_3P_4=0$. Hence there is no such state $\ket{\Psi}\ne0$. 

Calculating the states resulting from all sets of conditions satisfying $F(C_i)=-1$ for an even number of the four contexts $C_i$ analogously, we notice an interesting 
\begin{lem}
The states with an even number of minus signs in the projectors vanish using the contexts \begin{align*}C_1&=\{X_A,X_B,X_C\}\\C_2&=\{X_A,Y_B,Y_C\}\\C_3&=\{Y_A,X_B,Y_C\}\\C_4&=\{Y_A,Y_B,X_C\}\text{.}\end{align*} It is only possible to find a state $\ket{\Psi}\ne0$ for an odd number of minus signs in the projectors.\label{oddnumber}
\end{lem}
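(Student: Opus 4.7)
The plan is to exhibit a single algebraic identity satisfied by the four context operators and to read the lemma off from it. Set $O_1=X_AX_BX_C$, $O_2=X_AY_BY_C$, $O_3=Y_AX_BY_C$, $O_4=Y_AY_BX_C$, so that each projector takes the form $P_i=(\mathbbm{1}+\eta_i O_i)/2$ with $\eta_i\in\{\pm1\}$; a minus sign in the $i$-th projector corresponds to $\eta_i=-1$ and hence to projection onto the $(-1)$-eigenspace of $O_i$. Thus a nonzero $\ket{\Psi}$ stabilised by all four projectors is precisely a simultaneous $O_i$-eigenvector with eigenvalues $\epsilon_i=-\eta_i$, and the lemma reduces to deciding which sign tuples $(\epsilon_1,\epsilon_2,\epsilon_3,\epsilon_4)\in\{\pm1\}^4$ appear in the joint spectrum.

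The first step is a factorwise computation on each of the three qubit slots, using $XY=iZ$, $YX=-iZ$, and $X^2=Y^2=\mathbbm{1}$. On qubit $A$ the product of the four Paulis is $XXYY=\mathbbm{1}$; on qubit $B$ it is $XYXY=(iZ)^2=-\mathbbm{1}$; on qubit $C$ it is $XYYX=\mathbbm{1}$. Tensoring gives
\begin{equation*}
O_1O_2O_3O_4 \;=\; \mathbbm{1}\otimes(-\mathbbm{1})\otimes\mathbbm{1} \;=\; -\mathbbm{1}.
\end{equation*}
A parallel slot-by-slot count of anticommuting Pauli pairs shows that any two of the $O_i$ commute: each pair of contexts shares exactly one observable and disagrees on the other two, yielding an even number of anticommuting tensor slots.

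Given these two facts the lemma drops out. On a simultaneous eigenvector $\ket{\Psi}\neq 0$ with $O_i\ket{\Psi}=\epsilon_i\ket{\Psi}$, the identity $O_1O_2O_3O_4=-\mathbbm{1}$ forces $\epsilon_1\epsilon_2\epsilon_3\epsilon_4=-1$, i.e.\ an odd number of the $\epsilon_i$ equal $-1$, equivalently an odd number of minus signs in the projectors. Conversely, for any sign tuple obeying $\prod_i\epsilon_i=-1$, the commuting Hermitian involutions $\{O_1,O_2,O_3,O_4\}$, constrained only by $O_1O_2O_3O_4=-\mathbbm{1}$, generate an abelian subgroup whose $8$ one-dimensional joint eigenspaces partition the three-qubit Hilbert space and are indexed exactly by such tuples, so the desired $\ket{\Psi}$ exists (and is unique up to phase).

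The only genuine obstacle is tidy bookkeeping for the three single-qubit Pauli products and for the pairwise commutator check; both are routine, and once in hand the lemma is a one-line consequence of the spectral theorem for commuting observables. This is, of course, the same algebraic coincidence underpinning the Mermin formulation of the GHZ paradox, rephrased in the present projector language.
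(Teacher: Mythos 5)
Your proof is correct, but it takes a genuinely different route from the paper's. The paper's entire proof is the remark that one verifies $P_1P_2P_3P_4=\ket{\Psi}\bra{\Psi}$ or $P_1P_2P_3P_4=0$ case by case over all sign assignments --- a brute-force enumeration. You instead derive the single operator identity $O_1O_2O_3O_4=-\mathbbm{1}$ together with pairwise commutativity of the $O_i$, from which both halves of the lemma follow at once: the product identity forces $\epsilon_1\epsilon_2\epsilon_3\epsilon_4=-1$ on any joint eigenvector (ruling out the even-parity cases), and the spectral decomposition of the abelian group generated by $O_1,O_2,O_3$ supplies a joint eigenvector, unique up to phase, for every odd-parity sign tuple. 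This is the Mermin stabilizer argument; it buys a conceptual explanation of where the parity obstruction comes from and a uniqueness statement that the paper's verification leaves implicit. Two small repairs: the identification $\epsilon_i=-\eta_i$ should read $\epsilon_i=\eta_i$, since $(\mathbbm{1}+\eta_iO_i)/2$ fixes the $\eta_i$-eigenspace of $O_i$; the slip is harmless here because negating all four signs preserves the parity of the number of $-1$'s. Also, the assertion that the joint eigenspaces are one-dimensional and realize all eight odd-parity tuples tacitly uses that $O_1,O_2,O_3$ are independent modulo phases (no product of a proper nonempty subset is $\pm\mathbbm{1}$); this is true but deserves a line of verification, e.g.\ $O_1O_2=-\,\mathbbm{1}\otimes Z\otimes Z$ is not a scalar.
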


\begin{proof}
	We can prove this by calculating $P_1 P_2 P
	_3 P_4 = \ket{\Psi}\bra{\Psi}$ for all cases.
\end{proof}
On one hand we reasoned that for a state with an odd number of minus signs there is no global face. On the other we proved that there is no state $\ket{\Psi}\ne0$ satisfying $F(C_i)=-1$ for an even number of $C_i$. Thus there is no state producing an empirical model with a global section using the contexts defined at the beginning of the section.

\section{The cluster state on a ring of $5$ qubits.}
We consider now a cluster state model on a ring of $n=5$ qubits. This state is stabilized by five stabiliser operators (the first five below). Taking products we generate five additional stabilisers: 
\begin{align*}
	XZ\mathbbm{1}\mathbbm{1}Z\ket{\Psi}&=\ket{\Psi}&
ZXZ\mathbbm{1}\mathbbm{1}\ket{\Psi}&=\ket{\Psi}\\
\mathbbm{1}ZXZ\mathbbm{1}\ket{\Psi}&=\ket{\Psi}&
\mathbbm{1}\mathbbm{1}ZXZ\ket{\Psi}&=\ket{\Psi}\\
Z\mathbbm{1}\mathbbm{1}ZX\ket{\Psi}&=\ket{\Psi}&
ZX\mathbbm{1}XZ\ket{\Psi}&=\ket{\Psi}\\
ZZX\mathbbm{1}X\ket{\Psi}&=\ket{\Psi}&
XZZX\mathbbm{1}\ket{\Psi}&=\ket{\Psi}\\
\mathbbm{1}XZZX\ket{\Psi}&=\ket{\Psi}&
X\mathbbm{1}XZZ\ket{\Psi}&=\ket{\Psi}
\end{align*}
We use for our contexts the observables contained in each of these $10$ stabilizers. These are listed in Table \ref{cluster1D5}. 
The entries with nonzero probabilities are used for building the bundle diagram. For a better overview we depict the base of the bundle diagram and the bundle diagram itself in two different figures, divided in the contexts $C_1$ to $C_5$ and $C_6$ to $C_{10}$. (The complete bundle diagram is the union of these two figures.) Note that since we have contexts involving $4$ observables we are unable to depict the sections above the corresponding quadrilaterals as flat surfaces; they are buckled in general.
\begin{figure}[H]
	\centering
\resizebox{\columnwidth}{!}{
	\begin{tabular}{c |c c c c c | c c c c c c c c }
		&$1$ & $2$ & $3$ & $4$ &$5$ &$ 0 0 0 $ & $ 0 0 1 $ & $ 0 1 0 $ & $ 0 1 1 $ & $ 1 0 0 $ & $ 1 0 1 $ & $ 1 1 0 $ & $ 1 1 1 $\\
		\hline
		$C_1$ &$X_1$ & $Z_2$  &$\mathbbm{1}$& $\mathbbm{1}$&    $Z_5$ &$1/4$ &  $0$  &  $0$ & $1/4$ &     $0$  &   $1/4$ &   $1/4$ & $0$ \\
		$C_2$ &$Z_1$ & $X_2$ & $Z_3$&  $\mathbbm{1}$& $\mathbbm{1}$& $1/4$ &  $0$  &  $0$ & $1/4$ &     $0$  &   $1/4$ &   $1/4$ & $0$ \\
		$C_3$ &$\mathbbm{1}$& $Z_2$ & $X_3$  &   $Z_4$ &  $\mathbbm{1}$&  $1/4$ &  $0$  &  $0$ & $1/4$ &     $0$  &   $1/4$ &   $1/4$ & $0$ \\
		$C_4$  &$\mathbbm{1}$& $\mathbbm{1}$& $Z_3$ & $X_4$  &   $Z_5$ &   $1/4$ &  $0$  &  $0$ & $1/4$ &     $0$  &   $1/4$ &   $1/4$ & $0$ \\
		$C_5$ &$Z_1$ & $\mathbbm{1}$& $\mathbbm{1}$&$Z_4$  &   $X_5$&   $1/4$ &  $0$  &  $0$ & $1/4$ &     $0$  &   $1/4$ &   $1/4$ & $0$ 
\end{tabular}}

\vspace{7mm}

	\resizebox{\columnwidth}{!}{
\begin{tabular}{c |c c c c c | c c c c c c c c }
		&$1$ & $2$ & $3$ & $4$ &$5$ &$0 0 0 0 $ & $0 0 0 1 $ & $0 0 1 0 $ & $0 0 1 1 $ & $0 1 0 0 $ & $0 1 0 1 $ & $0 1 1 0 $ & $0 1 1 1 $\\
		\hline
		$C_6$ &$Z_1$ & $X_2$  &$\mathbbm{1}$& $X_4$&    $Z_5$ &$1/4$ &  $0$  &  $0$ & $1/4$ &     $0$  &   $1/4$ &   $1/4$ & $0$ \\
		$C_7$&$Z_1$ & $Z_2$  &$X_3$&$\mathbbm{1}$&    $X_5$& $1/4$ &  $0$  &  $0$ & $1/4$ &     $0$  &   $1/4$ &   $1/4$ & $0$ \\
		$C_8$ &$X_1$& $Z_2$ & $Z_3$  &   $X_4$ &  $\mathbbm{1}$&  $1/4$ &  $0$  &  $0$ & $1/4$ &     $0$  &   $1/4$ &   $1/4$ & $0$ \\
		$C_9$  &$\mathbbm{1}$& $X_2$& $Z_3$ & $Z_4$  &   $X_5$ &   $1/4$ &  $0$  &  $0$ & $1/4$ &     $0$  &   $1/4$ &   $1/4$ & $0$ \\
		$C_{10}$ &$X_1$ & $\mathbbm{1}$& $X_3$&$Z_4$  &   $Z_5$&   $1/4$ &  $0$  &  $0$ & $1/4$ &     $0$  &   $1/4$ &   $1/4$ & $0$\\ 
			 & &  & & & &$1 0 0 0 $ & $1 0 0 1 $ & $1 0 1 0 $ & $1 0 1 1 $ & $1 1 0 0 $ & $1 1 0 1 $ & $1 1 1 0 $ & $1 1 1 1 $\\
		\hline
		$C_6$ &$Z_1$ & $X_2$  &$\mathbbm{1}$&$X_4$&    $Z_5$ &$0$ &  $1/4$  &  $1/4$ & $0$ &     $1/4$  &   $0$ &   $0$ & $1/4$ \\
		$C_7$&$Z_1$ & $Z_2$  &$X_3$&$\mathbbm{1}$&    $X_5$ &$0$ &  $1/4$  &  $1/4$ & $0$ &     $1/4$  &   $0$ &   $0$ & $1/4$ \\
		$C_8$ &$X_1$& $Z_2$ & $Z_3$  &   $X_4$ &  $\mathbbm{1}$&  $0$ &  $1/4$  &  $1/4$ & $0$ &     $1/4$  &   $0$ &   $0$ & $1/4$ \\
		$C_9$  &$\mathbbm{1}$& $X_2$& $Z_3$ & $Z_4$  &   $X_5$ &   $0$ &  $1/4$  &  $1/4$ & $0$ &     $1/4$  &   $0$ &   $0$ & $1/4$ \\
		$C_{10}$ &$X_1$ & $\mathbbm{1}$& $X_3$&$Z_4$  &   $Z_5$&  $0$ &  $1/4$  &  $1/4$ & $0$ &     $1/4$  &   $0$ &   $0$ & $1/4$ 
\end{tabular}}
\captionof{table}{Empirical model for cluster state $n=5$.}\label{cluster1D5}
\end{figure}
\begin{figure}[H]
	\centering
	\begin{tikzpicture}[scale=0.8]
	\coordinate (X1) at (.7,0);
	\node at (X1) [right = .01mm of X1] {$X_1$};
	\coordinate (X2) at (0.21631,0.66574);
	\node at (X2) [above = .01mm of X2] {$X_2$};
	\coordinate (X3) at (-0.56631,0.41145);
	\node at (X3) [above left = .01mm of X3] {$X_3$};
	\coordinate (X4) at (-0.56631,-0.41145);
	\node at (X4) [below left = .01mm of X4] {$X_4$};
	\coordinate (X5) at (0.21631,-0.66574);
	\node at (X5) [below = .01mm of X5] {$X_5$};
	\coordinate (Z1) at (3.5,0);
	\node at (Z1) [right = 1mm of Z1] {$Z_1$};
	\coordinate (Z2) at (1.0816,3.3287);
	\node at (Z2) [above = 1mm of Z2] {$Z_2$};
	\coordinate (Z3) at (-2.8316,2.0572);
	\node at (Z3) [above = 1mm of Z3] {$Z_3$};
	\coordinate (Z4) at (-2.8316,-2.0572);
	\node at (Z4) [left = 1mm of Z4] {$Z_4$};
	\coordinate (Z5) at (1.0816,-3.3287);
	\node at (Z5) [below = 1mm of Z5] {$Z_5$};
	\draw[fill,opacity=0.3,teal] (Z1) -- (X2) -- (Z3) -- cycle;
	\draw[fill,opacity=0.3,orange] (Z2) -- (X3) -- (Z4) -- cycle;
	\draw[fill,opacity=0.3,blue] (Z3) -- (X4) -- (Z5) -- cycle;
	\draw[fill,opacity=0.3,pink] (Z4) -- (X5) -- (Z1) -- cycle;
	\draw[fill,opacity=0.3,violet] (Z5) -- (X1) -- (Z2) -- cycle;
\end{tikzpicture}\captionof{figure}{Base of bundle diagram for the cluster state on a ring of $n=5$ qubits involving contexts $C_1-C_5$.}
\end{figure}
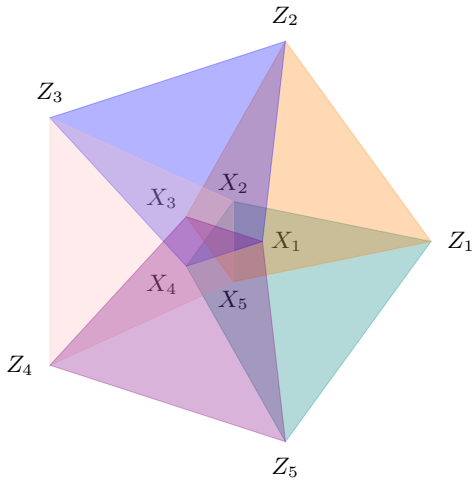
\begin{figure}[H]
	\centering
	\begin{tikzpicture}[scale=0.8]
	\coordinate (X1) at (.7,0);
	\node at (X1) [right = .01mm of X1] {$X_1$};
	\coordinate (X2) at (0.21631,0.66574);
	\node at (X2) [above = .01mm of X2] {$X_2$};
	\coordinate (X3) at (-0.56631,0.41145);
	\node at (X3) [above left = .01mm of X3] {$X_3$};
	\coordinate (X4) at (-0.56631,-0.41145);
	\node at (X4) [below left = .01mm of X4] {$X_4$};
	\coordinate (X5) at (0.21631,-0.66574);
	\node at (X5) [below = .01mm of X5] {$X_5$};
	\coordinate (Z1) at (3.5,0);
	\node at (Z1) [right = 1mm of Z1] {$Z_1$};
	\coordinate (Z2) at (1.0816,3.3287);
	\node at (Z2) [above = 1mm of Z2] {$Z_2$};
	\coordinate (Z3) at (-2.8316,2.0572);
	\node at (Z3) [above = 1mm of Z3] {$Z_3$};
	\coordinate (Z4) at (-2.8316,-2.0572);
	\node at (Z4) [left = 1mm of Z4] {$Z_4$};
	\coordinate (Z5) at (1.0816,-3.3287);
	\node at (Z5) [below = 1mm of Z5] {$Z_5$};
\draw[fill,opacity=0.3,teal] (Z1) -- (X2) -- (X4)-- (Z5) -- cycle;
\draw[fill,opacity=0.3,orange] (Z1)--(Z2)--(X3)--(X5) -- cycle;
\draw[fill,opacity=0.3,blue] (X1)--(Z2)--(Z3)--(X4) -- cycle;
\draw[fill,opacity=0.3,pink] (X2)--(Z3)--(Z4)--(X5) -- cycle;
\draw[fill,opacity=0.3,violet] (X1)--(X3)--(Z4)--(Z5)-- cycle;
	\end{tikzpicture}\captionof{figure}{Base of bundle diagram for the cluster state on a ring of $n=5$ qubits involving contexts $C_6-C_{10}$.}
\end{figure}
If there exists a triangle or square which cannot be continued to a global section then the model is contextual. We prove this by considering the teal triangle $Z_10-X_20-Z_30$ and depicting all ways of trying to find a global section via a tree diagram, as before. We start with the chosen triangle and draw all triangles containing $X_20$. We can see that only two of the first five children, $A$ and $B$, do not contain a $+$ symbol. All vertices of $A$ should be feasible for building a global section with these children. But continuing $X_50$ one can see that at the end there are no compatible triangles or squares. The case $B$ is similar. We hence cannot build a global section for the triangle $Z_10-X_20-Z_30$ and therefore the model is contextual.

\begin{figure}[H]
	\centering
	\begin{tikzpicture}[scale=1.1]
	\coordinate (X1) at (.7,0,0);
	\node at (X1) [right = .01mm of X1] {$X_1$};
	\coordinate (X2) at (0.21631,0,0.66574);
	\node at (X2) [above = .01mm of X2] {$X_2$};
	\coordinate (X3) at (-0.56631,0,0.41145);
	\node at (X3) [above left = .01mm of X3] {$X_3$};
	\coordinate (X4) at (-0.56631,0,-0.41145);
	\node at (X4) [below left = .01mm of X4] {$X_4$};
	\coordinate (X5) at (0.21631,0,-0.66574);
	\node at (X5) [below = .01mm of X5] {$X_5$};
	\coordinate (Z1) at (3.5,0,0);
	\node at (Z1) [right = 1mm of Z1] {$Z_1$};
	\coordinate (Z2) at (1.0816,0,3.3287);
	\node at (Z2) [above = 1mm of Z2] {$Z_2$};
	\coordinate (Z3) at (-2.8316,0,2.0572);
	\node at (Z3) [above = 1mm of Z3] {$Z_3$};
	\coordinate (Z4) at (-2.8316,0,-2.0572);
	\node at (Z4) [left = 1mm of Z4] {$Z_4$};
	\coordinate (Z5) at (1.0816,0,-3.3287);
	\node at (Z5) [below = 1mm of Z5] {$Z_5$};
	\draw[fill,opacity=0.3] (Z1) -- (X2) -- (Z3) -- cycle;
	\draw[fill,opacity=0.3] (Z2) -- (X3) -- (Z4) -- cycle;
	\draw[fill,opacity=0.3] (Z3) -- (X4) -- (Z5) -- cycle;
	\draw[fill,opacity=0.3] (Z4) -- (X5) -- (Z1) -- cycle;
	\draw[fill,opacity=0.3] (Z5) -- (X1) -- (Z2) -- cycle;
	\coordinate (X11) at (.7,1,0);
	\node at (X11) [right = .01mm of X11] {$1$};
	\coordinate (X21) at (0.21631,1,0.66574);
	\node at (X21) [above = .01mm of X21] {$1$};
	\coordinate (X31) at (-0.56631,1,0.41145);
	\node at (X31) [above left = .01mm of X31] {$1$};
	\coordinate (X41) at (-0.56631,1,-0.41145);
	\node at (X41) [below left = .01mm of X41] {$1$};
	\coordinate (X51) at (0.21631,1,-0.66574);
	\node at (X51) [below = .01mm of X51] {$1$};
	\coordinate (Z11) at (3.5,1,0);
	\node at (Z11) [right = 1mm of Z11] {$1$};
	\coordinate (Z21) at (1.0816,1,3.3287);
	\node at (Z21) [above = 1mm of Z21] {$1$};
	\coordinate (Z31) at (-2.8316,1,2.0572);
	\node at (Z31) [above = 1mm of Z31] {$1$};
	\coordinate (Z41) at (-2.8316,1,-2.0572);
	\node at (Z41) [left = 1mm of Z41] {$1$};
	\coordinate (Z51) at (1.0816,1,-3.3287);
	\node at (Z51) [below = 1mm of Z51] {$1$};
	\draw[dashed, gray] (Z11) -- (X21) -- (Z31) -- cycle;
	\draw[dashed, gray] (Z21) -- (X31) -- (Z41) -- cycle;
	\draw[dashed, gray] (Z31) -- (X41) -- (Z51) -- cycle;
	\draw[dashed, gray] (Z41) -- (X51) -- (Z11) -- cycle;
	\draw[dashed, gray] (Z51) -- (X11) -- (Z21) -- cycle;
	\coordinate (X10) at (.7,2.5,0);
	\node at (X10) [right = .01mm of X10] {$0$};
	\coordinate (X20) at (0.21631,2.5,0.66574);
	\node at (X20) [above = .01mm of X20] {$0$};
	\coordinate (X30) at (-0.56631,2.5,0.41145);
	\node at (X30) [above left = .01mm of X30] {$0$};
	\coordinate (X40) at (-0.56631,2.5,-0.41145);
	\node at (X40) [below left = .01mm of X40] {$0$};
	\coordinate (X50) at (0.21631,2.5,-0.66574);
	\node at (X50) [below = .01mm of X50] {$0$};
	\coordinate (Z10) at (3.5,2.5,0);
	\node at (Z10) [right = 1mm of Z10] {$0$};
	\coordinate (Z20) at (1.0816,2.5,3.3287);
	\node at (Z20) [above = 1mm of Z20] {$0$};
	\coordinate (Z30) at (-2.8316,2.5,2.0572);
	\node at (Z30) [above = 1mm of Z30] {$0$};
	\coordinate (Z40) at (-2.8316,2.5,-2.0572);
	\node at (Z40) [left = 1mm of Z40] {$0$};
	\coordinate (Z50) at (1.0816,2.5,-3.3287);
	\node at (Z50) [below = 1mm of Z50] {$2$};

	\draw[dashed, gray] (Z10) -- (X20) -- (Z30) -- cycle;
	\draw[dashed, gray] (Z20) -- (X30) -- (Z40) -- cycle;
	\draw[dashed, gray] (Z30) -- (X40) -- (Z50) -- cycle;
	\draw[dashed, gray] (Z40) -- (X50) -- (Z10) -- cycle;
	\draw[dashed, gray] (Z50) -- (X10) -- (Z20) -- cycle;
	\draw[dashed, gray] (X1) -- (X10);
	\draw[dashed, gray] (X2) -- (X20);
	\draw[dashed, gray] (X3) -- (X30);	
	\draw[dashed, gray] (X4) -- (X40);	
	\draw[dashed, gray] (X5) -- (X50);
	\draw[dashed, gray] (Z1) -- (Z10);	
	\draw[dashed, gray] (Z2) -- (Z20);	
	\draw[dashed, gray] (Z3) -- (Z30);	
	\draw[dashed, gray] (Z4) -- (Z40);
	\draw[dashed, gray] (Z5) -- (Z50);
	\draw[fill,opacity=0.3,teal] (Z10) -- (X20) -- (Z30) -- cycle;
	\draw[fill,opacity=0.3,teal] (Z10) -- (X21) -- (Z31) -- cycle;
	\draw[fill,opacity=0.3,teal] (Z11) -- (X20) -- (Z31) -- cycle;
	\draw[fill,opacity=0.3,teal] (Z11) -- (X21) -- (Z30) -- cycle;
	
	\draw[fill,opacity=0.3,orange] (Z20) -- (X30) -- (Z40) -- cycle;
	\draw[fill,opacity=0.3,orange] (Z20) -- (X31) -- (Z41) -- cycle;
	\draw[fill,opacity=0.3,orange] (Z21) -- (X30) -- (Z41) -- cycle;
	\draw[fill,opacity=0.3,orange] (Z21) -- (X31) -- (Z40) -- cycle;
	
	\draw[fill,opacity=0.3,blue] (Z30) -- (X40) -- (Z50) -- cycle;
	\draw[fill,opacity=0.3,blue] (Z30) -- (X41) -- (Z51) -- cycle;
	\draw[fill,opacity=0.3,blue] (Z31) -- (X40) -- (Z51) -- cycle;
	\draw[fill,opacity=0.3,blue] (Z31) -- (X41) -- (Z50) -- cycle;
	
	\draw[fill,opacity=0.3,pink] (Z10) -- (Z40) -- (X50) -- cycle;
	\draw[fill,opacity=0.3,pink] (Z10) -- (Z41) -- (X51) -- cycle;
	\draw[fill,opacity=0.3,pink] (Z11) -- (Z40) -- (X51) -- cycle;
	\draw[fill,opacity=0.3,pink] (Z11) -- (Z41) -- (X50) -- cycle;
	
	\draw[fill,opacity=0.3,violet] (X10) -- (Z20) -- (Z50) -- cycle;
	\draw[fill,opacity=0.3,violet] (X10) -- (Z21) -- (Z51) -- cycle;
	\draw[fill,opacity=0.3,violet] (X11) -- (Z20) -- (Z51) -- cycle;
	\draw[fill,opacity=0.3,violet] (X11) -- (Z21) -- (Z50) -- cycle;
	\end{tikzpicture}\captionof{figure}{Bundle diagram for the cluster state on a ring of $n=5$ qubits for contexts $C_1-C_5$.}
\end{figure}
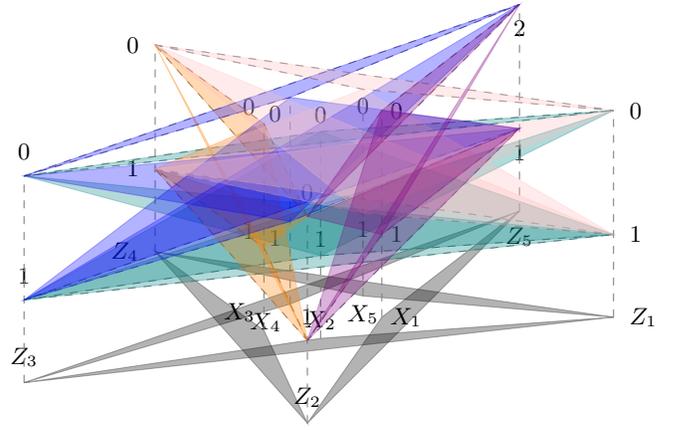
\begin{figure}[H]
	\centering
	\begin{tikzpicture}[scale=1.1]
	\coordinate (X1) at (.7,0,0);
	\node at (X1) [right = .01mm of X1] {$X_1$};
	\coordinate (X2) at (0.21631,0,0.66574);
	\node at (X2) [above = .01mm of X2] {$X_2$};
	\coordinate (X3) at (-0.56631,0,0.41145);
	\node at (X3) [above left = .01mm of X3] {$X_3$};
	\coordinate (X4) at (-0.56631,0,-0.41145);
	\node at (X4) [below left = .01mm of X4] {$X_4$};
	\coordinate (X5) at (0.21631,0,-0.66574);
	\node at (X5) [below = .01mm of X5] {$X_5$};
	\coordinate (Z1) at (3.5,0,0);
	\node at (Z1) [right = 1mm of Z1] {$Z_1$};
	\coordinate (Z2) at (1.0816,0,3.3287);
	\node at (Z2) [above = 1mm of Z2] {$Z_2$};
	\coordinate (Z3) at (-2.8316,0,2.0572);
	\node at (Z3) [above = 1mm of Z3] {$Z_3$};
	\coordinate (Z4) at (-2.8316,0,-2.0572);
	\node at (Z4) [left = 1mm of Z4] {$Z_4$};
	\coordinate (Z5) at (1.0816,0,-3.3287);
	\node at (Z5) [below = 1mm of Z5] {$Z_5$};
\draw[fill,opacity=0.3] (Z1) -- (X2) -- (X4)-- (Z5) -- cycle;
\draw[fill,opacity=0.3] (Z1)--(Z2)--(X3)--(X5) -- cycle;
\draw[fill,opacity=0.3] (X1)--(Z2)--(Z3)--(X4) -- cycle;
\draw[fill,opacity=0.3] (X2)--(Z3)--(Z4)--(X5) -- cycle;
\draw[fill,opacity=0.3] (X1)--(X3)--(Z4)--(Z5)-- cycle;

	\coordinate (X11) at (.7,1,0);
	\node at (X11) [right = .01mm of X11] {$1$};
	\coordinate (X21) at (0.21631,1,0.66574);
	\node at (X21) [above = .01mm of X21] {$1$};
	\coordinate (X31) at (-0.56631,1,0.41145);
	\node at (X31) [above left = .01mm of X31] {$1$};
	\coordinate (X41) at (-0.56631,1,-0.41145);
	\node at (X41) [below left = .01mm of X41] {$1$};
	\coordinate (X51) at (0.21631,1,-0.66574);
	\node at (X51) [below = .01mm of X51] {$1$};
	\coordinate (Z11) at (3.5,1,0);
	\node at (Z11) [right = 1mm of Z11] {$1$};
	\coordinate (Z21) at (1.0816,1,3.3287);
	\node at (Z21) [above = 1mm of Z21] {$1$};
	\coordinate (Z31) at (-2.8316,1,2.0572);
	\node at (Z31) [above = 1mm of Z31] {$1$};
	\coordinate (Z41) at (-2.8316,1,-2.0572);
	\node at (Z41) [left = 1mm of Z41] {$1$};
	\coordinate (Z51) at (1.0816,1,-3.3287);
	\node at (Z51) [below = 1mm of Z51] {$1$};
\draw[dashed, gray] (Z11) -- (X21) -- (X41)-- (Z51) -- cycle;
\draw[dashed, gray] (Z11)--(Z21)--(X31)--(X51) -- cycle;
\draw[dashed, gray] (X11)--(Z21)--(Z31)--(X41) -- cycle;
\draw[dashed, gray] (X21)--(Z31)--(Z41)--(X51) -- cycle;
\draw[dashed, gray] (X11)--(X31)--(Z41)--(Z51)-- cycle;
	\coordinate (X10) at (.7,2.5,0);
	\node at (X10) [right = .01mm of X10] {$0$};
	\coordinate (X20) at (0.21631,2.5,0.66574);
	\node at (X20) [above = .01mm of X20] {$0$};
	\coordinate (X30) at (-0.56631,2.5,0.41145);
	\node at (X30) [above left = .01mm of X30] {$0$};
	\coordinate (X40) at (-0.56631,2.5,-0.41145);
	\node at (X40) [below left = .01mm of X40] {$0$};
	\coordinate (X50) at (0.21631,2.5,-0.66574);
	\node at (X50) [below = .01mm of X50] {$0$};
	\coordinate (Z10) at (3.5,2.5,0);
	\node at (Z10) [right = 1mm of Z10] {$0$};
	\coordinate (Z20) at (1.0816,2.5,3.3287);
	\node at (Z20) [above = 1mm of Z20] {$0$};
	\coordinate (Z30) at (-2.8316,2.5,2.0572);
	\node at (Z30) [above = 1mm of Z30] {$0$};
	\coordinate (Z40) at (-2.8316,2.5,-2.0572);
	\node at (Z40) [left = 1mm of Z40] {$0$};
	\coordinate (Z50) at (1.0816,2.5,-3.3287);
	\node at (Z50) [below = 1mm of Z50] {$0$};
\draw[dashed, gray] (Z10) -- (X20) -- (X40)-- (Z50) -- cycle;
\draw[dashed, gray] (Z10)--(Z20)--(X30)--(X50) -- cycle;
\draw[dashed, gray] (X10)--(Z20)--(Z30)--(X40) -- cycle;
\draw[dashed, gray] (X20)--(Z30)--(Z40)--(X50) -- cycle;
\draw[dashed, gray] (X10)--(X30)--(Z40)--(Z50)-- cycle;
	\draw[dashed, gray] (X1) -- (X10);
	\draw[dashed, gray] (X2) -- (X20);
	\draw[dashed, gray] (X3) -- (X30);	
	\draw[dashed, gray] (X4) -- (X40);	
	\draw[dashed, gray] (X5) -- (X50);
	\draw[dashed, gray] (Z1) -- (Z10);	
	\draw[dashed, gray] (Z2) -- (Z20);	
	\draw[dashed, gray] (Z3) -- (Z30);	
	\draw[dashed, gray] (Z4) -- (Z40);
	\draw[dashed, gray] (Z5) -- (Z50);
	\draw[fill,opacity=0.1,teal] (Z10)--(X10)--(X40)--(Z50)-- cycle;
	\draw[fill,opacity=0.1,teal] (Z10)--(X10)--(X41)--(Z51)-- cycle;
	\draw[fill,opacity=0.1,teal] (Z10)--(X11)--(X40)--(Z51)-- cycle;
	\draw[fill,opacity=0.1,teal] (Z10)--(X11)--(X41)--(Z50)-- cycle;
	\draw[fill,opacity=0.1,teal] (Z11)--(X10)--(X40)--(Z51)-- cycle;
	\draw[fill,opacity=0.1,teal] (Z11)--(X10)--(X41)--(Z50)-- cycle;
	\draw[fill,opacity=0.1,teal] (Z11)--(X11)--(X40)--(Z50)-- cycle;
	\draw[fill,opacity=0.1,teal] (Z11)--(X11)--(X41)--(Z51)-- cycle;

	\draw[fill,opacity=0.1,orange] (Z20)--(X30)--(X50)--(Z10)-- cycle;
	\draw[fill,opacity=0.1,orange] (Z20)--(X30)--(X51)--(Z11)-- cycle;
	\draw[fill,opacity=0.1,orange] (Z20)--(X31)--(X50)--(Z11)-- cycle;
	\draw[fill,opacity=0.1,orange] (Z20)--(X31)--(X51)--(Z10)-- cycle;
	\draw[fill,opacity=0.1,orange] (Z21)--(X30)--(X50)--(Z11)-- cycle;
	\draw[fill,opacity=0.1,orange] (Z21)--(X30)--(X51)--(Z10)-- cycle;
	\draw[fill,opacity=0.1,orange] (Z21)--(X31)--(X50)--(Z10)-- cycle;
	\draw[fill,opacity=0.1,orange] (Z21)--(X31)--(X51)--(Z11)-- cycle;

	\draw[fill,opacity=0.1,blue] (Z30)--(X40)--(X10)--(Z20)-- cycle;
	\draw[fill,opacity=0.1,blue] (Z30)--(X40)--(X11)--(Z21)-- cycle;
	\draw[fill,opacity=0.1,blue] (Z30)--(X41)--(X10)--(Z21)-- cycle;
	\draw[fill,opacity=0.1,blue] (Z30)--(X41)--(X11)--(Z20)-- cycle;
	\draw[fill,opacity=0.1,blue] (Z31)--(X40)--(X10)--(Z21)-- cycle;
	\draw[fill,opacity=0.1,blue] (Z31)--(X40)--(X11)--(Z20)-- cycle;
	\draw[fill,opacity=0.1,blue] (Z31)--(X41)--(X10)--(Z20)-- cycle;
	\draw[fill,opacity=0.1,blue] (Z31)--(X41)--(X11)--(Z21)-- cycle;

	\draw[fill,opacity=0.1,pink] (Z40)--(X50)--(X20)--(Z30)-- cycle;
	\draw[fill,opacity=0.1,pink] (Z40)--(X50)--(X21)--(Z31)-- cycle;
	\draw[fill,opacity=0.1,pink] (Z40)--(X51)--(X20)--(Z31)-- cycle;
	\draw[fill,opacity=0.1,pink] (Z40)--(X51)--(X21)--(Z30)-- cycle;
	\draw[fill,opacity=0.1,pink] (Z41)--(X50)--(X20)--(Z31)-- cycle;
	\draw[fill,opacity=0.1,pink] (Z41)--(X50)--(X21)--(Z30)-- cycle;
	\draw[fill,opacity=0.1,pink] (Z41)--(X51)--(X20)--(Z30)-- cycle;
	\draw[fill,opacity=0.1,pink] (Z41)--(X51)--(X21)--(Z31)-- cycle;

	\draw[fill,opacity=0.1,violet] (Z50)--(X10)--(X30)--(Z40)-- cycle;
	\draw[fill,opacity=0.1,violet] (Z50)--(X10)--(X31)--(Z41)-- cycle;
	\draw[fill,opacity=0.1,violet] (Z50)--(X11)--(X30)--(Z41)-- cycle;
	\draw[fill,opacity=0.1,violet] (Z50)--(X11)--(X31)--(Z40)-- cycle;
	\draw[fill,opacity=0.1,violet] (Z51)--(X10)--(X30)--(Z41)-- cycle;
	\draw[fill,opacity=0.1,violet] (Z51)--(X10)--(X31)--(Z40)-- cycle;
	\draw[fill,opacity=0.1,violet] (Z51)--(X11)--(X30)--(Z40)-- cycle;
	\draw[fill,opacity=0.1,violet] (Z51)--(X11)--(X31)--(Z41)-- cycle;
	\end{tikzpicture}\captionof{figure}{Bundle diagram for the cluster state on a ring of $n=5$ qubits for contexts $C_6-C_{10}$.}
\end{figure}

\onecolumngrid
\begin{center}
\scalebox{1}{
	\begin{tikzpicture}
		[level distance=1cm,
		level 1/.style={sibling distance=3cm},
		level 2/.style={sibling distance=1cm},
		level 3/.style={sibling distance=3cm},
		level 4/.style={sibling distance=1cm},
		triangle/.style = {regular polygon, regular polygon sides=3 },
		square/.style = {regular polygon, regular polygon sides=4 }]
		\tikzstyle{every node}=[font=\small]
		\node {
		\begin{tikzpicture}[triangle/.style = {regular polygon, regular polygon sides=3 },level 1/.style={level distance=2.8cm,sibling distance=0.3cm,every child/.style={edge from parent/.style={draw,white}}},level 2/.style={level distance=1.3cm,sibling distance=0.3cm,every child/.style={edge from parent/.style={draw,black}}},level 3/.style={sibling distance=1cm},grow'=up]
		\node{}child{node{$X_20$}child{node[triangle,rotate=180,fill=teal,opacity=0.3,text width=2mm]{}child {node [rotate=270]{$Z_20*$}}child{node[rotate=270] {$Z_30**$}}}};	
           \end{tikzpicture}
		}
		child {node[triangle,fill=teal,opacity=0.3,right=2cm] { }
			child {node {$Z_11+$}}
			child {node {$Z_31+$}}
		}
		child {node[square,minimum size=.5cm,fill=pink,opacity=0.3,right=2cm] {A}
			child {node {$Z_40$}}
			child {node {$X_50$}
				child {node[square,minimum size=.5cm,fill=orange,opacity=0.3] { }
					child {node {$Z_20$}}
					child {node {$X_30$}
						child {node {$+$}}
					}
					child {node {$Z_10*$}}
					}
				child {node[square,minimum size=.5cm,fill=orange,opacity=0.3] { }
					child {node {$Z_21$}}
					child {node {$X_31$}
						child {node {$+$}}
					}
					child {node {$Z_10*$}}
					}
			}
			child {node {$Z_30**$}}
		}
		child[color=white]{}
		child {node[square,minimum size=.5cm,fill=pink,opacity=0.3] { }
			child {node {$Z_40$}}
			child {node {$X_51$}}
			child {node {$Z_31+$}}
		}
		child {node[square,minimum size=.5cm,fill=pink,opacity=0.3] {}
			child {node {$Z_41$}}
			child {node {$X_50$}}
			child {node {$Z_31+$}}
		}
		child {node[square,minimum size=.5cm,fill=pink,opacity=0.3] {B}
			child {node {$Z_41$}}
			child {node {$X_51$}
				child {node[square,minimum size=.5cm,fill=orange,opacity=0.3] {}
					child {node {$Z_20$}}
					child {node {$X_31$}
						child {node {$+$}}
					}
					child {node {$Z_10*$}}
					}
				child {node[square,minimum size=.5cm,fill=orange,opacity=0.3] { }
					child {node {$Z_21$}}
					child {node {$X_30$}
						child {node {$+$}}
					}
					child {node {$Z_10*$}}
					}
			}
			child {node {$Z_30**$}}
		}
		;
		\end{tikzpicture}
	}

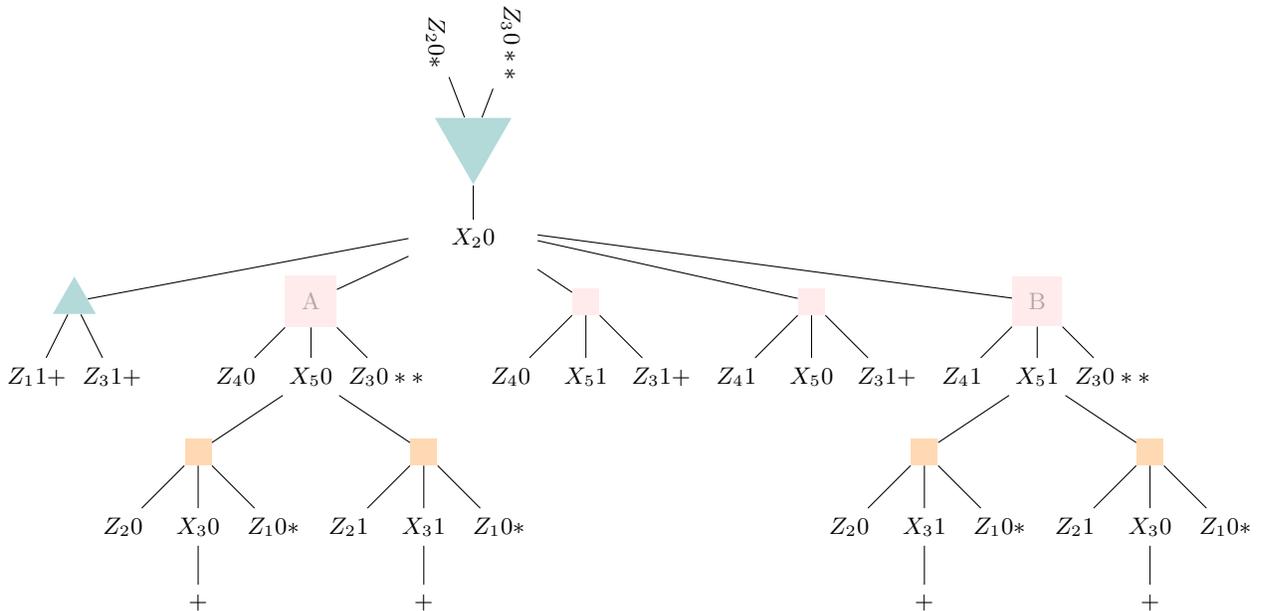
\captionof{figure}{Trying to build a global section with the triangle $Z_10-X_20-Z_30$.}\label{firsttree1}
\end{center}
\twocolumngrid

\section{Conclusion}
The aim of this work was to exploit \emph{bundle diagrams} to illustrate the contextuality of empirical models involving three or more agents. We considered first bipartite two-agent models with two outcomes as a motivation and then generalized the bundle diagram representation from the two-agent setting to the many agent setting and illustrated the contextuality of a model built on the Greenberger-Horne-Zeilinger state. This representation was then applied to visualise the contextuality of a joint measurability scenario involving a cluster state for a five-qubit ring. There are many interesting questions that arise in depicting contextuality in this way. For example, the bundle diagram itself is an abstract simplicial complex and the contextuality of the model corresponds to the ``twistedness'' of the bundle, like for a M\"obius strip. Finding a way to quantify the connection between orientability and contextuality more precisely is an intriguing research direction.

\begin{acknowledgments}
This work was supported by the the DFG through SFB 1227 (DQ-mat) and the RTG 1991. Helpful correspondence and discussions with Samson Abramsky, Rui Barbosa, and Dan Browne are gratefully acknowledged.
\end{acknowledgments}

\bibliographystyle{apsrev4-1}
\bibliography{paper}
\end{document}